\documentclass[12pt]{article}
\usepackage{graphicx}
\usepackage{amsfonts,amsmath,amsthm,amstext,amssymb,url}
\usepackage{color}
\usepackage[dvipsnames]{xcolor}
\usepackage{authblk}
\usepackage{hyperref}
\usepackage[applemac]{inputenc} 
\usepackage{caption}
\usepackage{float}

\def\R{\mathbb{R}}
 \newtheorem{theorem}{\sffamily Theorem}
\newtheorem{lemma}[theorem]{\sffamily Lemma}

\newtheorem{proposition}[theorem]{\sffamily Proposition}
\newtheorem{definition}[theorem]{\sffamily Definition}
\newtheorem{remark}[theorem]{\sffamily Remark}

\marginparwidth 0pt \oddsidemargin -.5cm \evensidemargin -.5cm
\marginparsep 0pt \topmargin -1cm \textwidth 17cm \textheight 22.5cm\sloppy

\title{Relative position of a parabola or a hyperbola and an ellipse without computing intersection points}

\date{}

\begin{document}

\author[1]{Jorge Caravantes}
\author[2]{Gema M. Diaz--Toca}
\author[3]{Mario Fioravanti}
\author[4]{Laureano Gonzalez--Vega\thanks{The authors have been partially supported by the grant PID2020-113192GB-I00/AEI/10.13039/501100011033 (Mathematical Visualization: Foundations, Algorithms and Applications) from the Spanish State Research Agency (Ministerio de Ciencia e Innovaci\'on).}}
\affil[1]{Universidad de Alcal\'a (Spain)\\ \texttt{jorge.caravantes@uah.es}}
\affil[2]{Universidad de Murcia (Spain)\\ \texttt{gemadiaz@um.es}}
\affil[3]{Universidad de Cantabria (Spain)\\ \texttt{mario.fioravanti@unican.es}}
\affil[4]{CUNEF Universidad (Spain)\\ \texttt{laureano.gonzalez@cunef.edu}}

\maketitle

\begin{abstract}
Efficient methods to determine the relative position of two conics are of great interest for applications in robotics, computer animation, CAGD, computational physics, and other areas. We present a method to obtain the relative position of a parabola or a hyperbola, and a coplanar ellipse, directly from the coefficients of their implicit equations, even if they are not given in canonical form, and avoiding the computation of the corresponding intersection points (and their characteristics). 
\end{abstract}

\section*{Introduction}

The problem of detecting the collisions or overlap of two conics in the plane is of interest to robotics, CAD/CAM, computational physics, computer animation, etc., where conics are often used for modelling (or enclosing) the shape of the objects under consideration. 

We present an efficient method to determine the relative position of either a parabola $\mathcal{N}$ or a hyperbola $\mathcal{NH}$, and an ellipse $\mathcal{M}$, in terms of the sign of several polynomial expressions derived from the coefficients of the implicit equations of $\mathcal{N}$ and $\mathcal{M}$. The main advantage of this approach relies on the fact that we do not require to compute the  intersection points of $\mathcal{N}$ (or $\mathcal{NH}$) and $\mathcal{M}$, nor their characteristics, and it is specially useful when $\mathcal{N}$ (or $\mathcal{NH}$) and $\mathcal{M}$ depend on one or several parameters.

Several authors have considered the case of two coplanar ellipses. In \cite{Etayo-et-al:2006} the properties of the pencil of two ellipses depending on one parameter are studied using Sturm-Habicht sequences, and ten types of relative positions are characterized. Using pencils of two ellipses analogously, in \cite{AET} a method to determine the relative position of two ellipses is explained obtaining ten different classes, which do not fully coincide with the previous classification.

Some characterization methods can be derived from the relative position of two quadrics, considering a suitable sectioning; in \cite{Wang-Wang-Kim:2001} they determine if two ellipsoids are separated or they touch tangentially, depending on the root configuration of their characteristic equation. Similarly, by sectioning the different cases of relative position of a one sheet hyperboloid and a sphere in \cite{Brozos:2018}, it may be obtained the corresponding algebraic characterization for the relative position of a hyperbola and a circle.

We follow here an approach similar to the one presented in \cite{Min-Chen:2008} and \cite{Liu-Chen:2004}. 
The problem considered here can be presented as a quantifier elimination problem over the reals (see \cite{BPR}), since we are looking for conditions on the coefficients of the equations defining the considered conics in order they produce a prescribed geometric configuration. The bridge between the problem at hand and quantifier elimination is the characteristic equation of the two considered conics, since the properties of its real roots determine each geometric configuration.

The paper is organized as follows. After some preliminaries, Section 2 introduces the necessary criteria to discern the relative position between parabola and ellipse. In Section 3, we discuss and prove the necessary criteria to discern the relative position between ellipse and hyperbola. Finally, Section 4 contains the conclusions. 

\section{Discriminants and Descartes' law of signs}

Given a polynomial $P=a_p T^p+ \ldots + a_0\in \mathbb{C}[T]$, $a_p\neq 0, p>0$, the discriminant of $P$ is defined to be
$$\text{Disc}(P)=\frac{(-1)^{\frac{p(p-1)}{2}}}{a_p}\mathrm{\bf Res}(P,P'),$$
where $\mathrm{\bf Res}(P,P')$ denotes the resultant of $P$ and its derivative.
If $P$ is monic and $x_1,\ldots,x_p$ are the roots of $P$ (repeated according to their multiplicities), it is well known that the discriminant is equal to
$$
\text{Disc}(P)=\prod\limits_{p\geq i > j \geq 1}(x_i-x_j)^2.
$$
Therefore, the use of the discriminant is typically linked to determine when a polynomial has multiple roots but, for low degree polynomials, it is also very useful to determine easily the number of different real roots (see \cite{Janson} for example). The next (very) well known lemma shows that the positivity of the discriminant characterizes the case of having three different real roots (see \cite{Janson}).

\begin{lemma}\label{ellipsesdisc}\hskip 0pt\\
Let $P(T)=a_3T^3+a_2T^2+a_1T+a_0$ be a polynomial in $\R[T]$ with $a_3\neq 0$,and
$${\rm Disc}(P)=  -27a_0^2a_3^2 + 18a_0a_1a_2a_3 - 4a_0a_2^3 - 4a_1^3a_3 + a_1^2a_2^2$$ its discriminant.
$P(T)$ has three different real roots if and only if $\rm{Disc}(P)>0$.\end{lemma}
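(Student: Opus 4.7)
The plan is to work with the product formula for the discriminant rather than with the explicit polynomial expression. The excerpt records that in the monic case $\text{Disc}(P)=\prod_{i>j}(x_i-x_j)^2$; for arbitrary leading coefficient $a_3$, the same derivation from the resultant definition yields $\text{Disc}(P)=a_3^{4}\prod_{i>j}(x_i-x_j)^2$. Since $a_3^{4}>0$, the sign of $\text{Disc}(P)$ agrees with the sign of $D:=\prod_{i>j}(x_i-x_j)^2$, so it suffices to analyse the sign of $D$.

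Next, I would partition the possible root configurations of $P\in\R[T]$. By the fundamental theorem of algebra and the reality of the coefficients, there are exactly three mutually exclusive possibilities: (i) three distinct real roots; (ii) a repeated root (necessarily real); (iii) one simple real root together with a pair of non-real complex conjugate roots. In case (i), every factor $(x_i-x_j)^2$ is a strictly positive real, so $D>0$. In case (ii), at least one factor vanishes, so $D=0$. In case (iii), writing the conjugate pair as $\alpha\pm i\beta$ with $\beta\neq 0$ and the real root as $x_1$, one computes $(x_2-x_3)^2=(2i\beta)^2=-4\beta^2<0$ and $(x_1-x_2)^2(x_1-x_3)^2=\bigl((x_1-\alpha)^2+\beta^2\bigr)^2>0$, so $D<0$.

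Combining the three cases yields the desired equivalence: $D>0$, and hence $\text{Disc}(P)>0$, iff we are in case (i), i.e.\ iff $P$ has three distinct real roots. There is no serious obstacle: the argument is a case analysis on the root configuration, and the only place where attention is required is case (iii), where one must track the sign contributed by the conjugate pair $(x_2-x_3)^2=-4\beta^2$. The explicit polynomial expression for $\text{Disc}(P)$ stated in the lemma does not need to be re-derived for the sign analysis, since it is a standard consequence of the resultant definition and plays no role in establishing the equivalence.
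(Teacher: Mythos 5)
Your proposal is correct. The paper does not actually supply a proof of this lemma --- it labels it ``(very) well known'' and delegates the argument to the reference \cite{Janson} --- and your case analysis (three distinct real roots $\Rightarrow$ all squared differences positive; a repeated root $\Rightarrow$ a vanishing factor; a conjugate pair $\alpha\pm i\beta$ contributing the negative factor $(2i\beta)^2=-4\beta^2$ against the positive factor $\bigl((x_1-\alpha)^2+\beta^2\bigr)^2$) is exactly the standard argument that reference contains, including the correct normalization $\mathrm{Disc}(P)=a_3^{4}\prod_{i>j}(x_i-x_j)^2$ that makes the sign analysis independent of the leading coefficient.
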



\subsection{Descartes' law of signs}

For a sequence of real numbers $b_0,b_1,\ldots,b_n$, ${\bf Var}(b_0,b_1,\ldots,b_n)$ will denote the number of sign changes in $b_0,b_1,\ldots,b_n$ after dropping the zeros in the sequence.

\begin{proposition}\hskip 0pt\\
Let $P$ be the polynomial in $\R[x]$,
\[P(x)=\sum_{k=0}^n a_kx^k,\]
where $a_n$ and $a_0$ are nonzero.
The number of positive real roots of $P(x)=0$, counted with multiplicity, is equal to $${\bf Var}(a_n,a_{n-1},\ldots,a_0)-2k$$ for some non--negative integer $k$.
\end{proposition}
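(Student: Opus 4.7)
The plan is to prove the statement via a single key lemma: for any $\alpha > 0$ and any $Q\in \R[x]$ with nonzero leading and constant coefficients, the polynomial $R=(x-\alpha)Q$ satisfies $\mathbf{Var}(R) = \mathbf{Var}(Q) + 2\ell + 1$ for some integer $\ell \geq 0$. Once this is granted, the reduction is: factor $P(x) = (x-\alpha_1)\cdots(x-\alpha_r)\, S(x)$, where $\alpha_1,\ldots,\alpha_r$ are the positive real roots of $P$ counted with multiplicity (so that $r$ is the quantity to be controlled) and $S\in\R[x]$ has no positive real roots. Since $a_0\neq 0$ forces each $\alpha_i\neq 0$ and $S(0)\neq 0$, the intermediate polynomials all have nonzero leading and constant coefficients and the lemma can be iterated, giving $\mathbf{Var}(P) = \mathbf{Var}(S) + r + 2m$ for some $m \geq 0$. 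Because $S$ keeps a constant sign on $(0,+\infty)$, its leading and constant coefficients share a sign, so $\mathbf{Var}(S) = 2t$ is even, and $\mathbf{Var}(P) - r = 2(m+t)$.

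To prove the key lemma I would separate the parity claim from the inequality. Writing $Q=\sum_{k=0}^d b_k x^k$, the coefficients of $R$ are $c_0 = -\alpha b_0$, $c_{d+1} = b_d$, and $c_k = b_{k-1} - \alpha b_k$ for $1\leq k\leq d$. Hence the leading coefficient of $R$ has the same sign as that of $Q$, while the constant term has the opposite sign; since a finite real sequence has an even number of sign changes exactly when its first and last nonzero entries agree in sign, this already shows that $\mathbf{Var}(R)-\mathbf{Var}(Q)$ is odd. For the inequality $\mathbf{Var}(R)\geq \mathbf{Var}(Q) + 1$, the essential observation is that whenever $b_{k-1}$ and $b_k$ have opposite signs, $c_k = b_{k-1} - \alpha b_k$ automatically inherits the sign of $b_{k-1}$, because $-\alpha b_k$ then has the same sign as $b_{k-1}$. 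A bookkeeping argument comparing the sign patterns of $(c_0,\ldots,c_{d+1})$ and $(b_0,\ldots,b_d)$ then shows that each sign change of $Q$ produces at least one distinct sign change of $R$, and the opposite sign of $c_0$ relative to $b_0$ contributes one additional sign change at the constant end.

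The main obstacle is precisely this inequality step: one must rule out the possibility that two consecutive sign changes of $Q$ collapse into a single sign change of $R$, and also handle the zeros that may appear among the $c_k$'s. A clean way to resolve both issues is by induction on $\deg Q$, peeling off either the leading or the constant term and applying the induction hypothesis to the resulting lower-degree polynomial; alternatively, one can argue directly by a case analysis on consecutive sign triples in $(b_{k-1},b_k,b_{k+1})$. Once this is done, the three ingredients above (factoring, parity, inequality) combine to give the desired expression $\mathbf{Var}(P)-2k$ for the number of positive roots of $P$.
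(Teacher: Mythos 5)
Your plan is a correct and complete route to Descartes' law of signs; note, though, that the paper does not prove this proposition at all --- it simply cites Remark 2.42 of Basu--Pollack--Roy and Mignotte, p.~199 --- so there is no in-paper argument to compare against. What you propose is essentially the classical proof found in those references: reduce everything to the single lemma that multiplying by $(x-\alpha)$ with $\alpha>0$ raises $\mathbf{Var}$ by a positive odd amount, then peel off the positive roots one at a time and observe that the rootless factor $S$ has even $\mathbf{Var}$ because its leading and constant coefficients share a sign. All the pieces you state are sound: the parity claim via $c_0=-\alpha b_0$ and $c_{d+1}=b_d$, the observation that $c_k=b_{k-1}-\alpha b_k$ inherits the sign of $b_{k-1}$ across a sign change of $Q$, and the final bookkeeping $\mathbf{Var}(P)=\mathbf{Var}(S)+r+2m$ with $\mathbf{Var}(S)$ even. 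The one place where your write-up is still a sketch is the inequality $\mathbf{Var}(R)\ge\mathbf{Var}(Q)+1$, and you correctly identify it as the crux; either of the fixes you name works, the cleanest being to select nonzero coefficients $b_{j_0},\dots,b_{j_v}$ of $Q$ realizing its $v$ sign changes (so strictly alternating in sign, with $j_v=d$) and to check that $c_0,c_{j_0+1},\dots,c_{j_{v-1}+1},c_{d+1}$ is a subsequence of the coefficients of $R$ exhibiting $v+1$ alternations. If you intend this as a finished proof rather than a plan, that verification should be written out, but there is no conceptual gap.
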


\begin{proposition}\label{realr}\hskip 0pt\\
Let $P$ be the polynomial in $\R[x]$, we call $ v$ the number of changes of signs of the list of its coefficients and we call $v'$ the number of changes of signs of the list of the coefficients of the polynomial $P(-x)$. Then, if all the roots of $P$ are real and nonzero, the number $r$ of positive roots of $P$ and the number $r'$ of its negative roots satisfy $$r=v \text{ and } r'=v'\, .$$
\end{proposition}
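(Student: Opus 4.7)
The plan is to deduce Proposition~\ref{realr} from the Descartes' rule stated in the preceding proposition, combined with a simple count. First I would observe that the substitution $x\mapsto -x$ sends negative roots of $P$ to positive roots of $P(-x)$ and vice versa, so applying the previous proposition to both $P(x)$ and $P(-x)$ yields the inequalities $r\le v$ and $r'\le v'$, with $v-r$ and $v'-r'$ non-negative even integers. These hold in full generality, without any assumption on the nature of the roots.

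Next I would invoke the hypothesis that every root is real and nonzero to write $r+r'=n$, where $n=\deg P$. Combined with the two Descartes bounds this gives $n = r+r' \le v+v'$. Hence the proposition will follow as soon as the reverse inequality $v+v'\le n$ is proven, because then $r+r'=v+v'$ together with $r\le v$ and $r'\le v'$ forces equality on each side separately.

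The step that requires attention, and which I expect to be the main obstacle, is establishing $v+v'\le n$. I would argue as follows. Let $0=i_0<i_1<\cdots<i_k=n$ be the indices of the nonzero coefficients of $P$ (both endpoints are present because $a_0$ and $a_n$ are nonzero). Between two consecutive nonzero coefficients $a_{i_j}$ and $a_{i_{j+1}}$, the passage to $P(-x)$ multiplies the second by $(-1)^{i_{j+1}-i_j}$. A short case analysis shows that if $i_{j+1}-i_j$ is odd, exactly one of $P(x)$ and $P(-x)$ registers a sign change at this gap, contributing $1$; if $i_{j+1}-i_j$ is even, both or neither register a change, contributing $0$ or $2$. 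In either case the contribution is at most $i_{j+1}-i_j$, so summing over $j$ gives
\[
v+v' \;\le\; \sum_{j=0}^{k-1}(i_{j+1}-i_j) \;=\; i_k-i_0 \;=\; n,
\]
which closes the argument. The only delicate point is the handling of gaps of even length, where one cannot simply say that each gap contributes $1$ to $v+v'$; otherwise the proof is a routine combination of Descartes' rule with the equality $r+r'=n$.
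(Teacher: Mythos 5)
Your proposal is correct, and it is essentially the classical argument: the paper itself gives no proof of this proposition, deferring to Remark 2.42 of Basu--Pollack--Roy and to Mignotte, and the proof found there is precisely your combination of the Descartes bounds $r\le v$, $r'\le v'$ with the count $v+v'\le n=r+r'$ obtained by analyzing the gaps between nonzero coefficients. Your handling of the even-length gaps (contribution $0$ or $2$, still bounded by the gap length) is the right way to close the one delicate point, and your use of the hypothesis that the roots are nonzero to guarantee $a_0\neq 0$ is exactly what makes the telescoping sum end at $n$.
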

See Remark 2.42 in \cite{BPR}, and \cite{Mignotte:1991}, p. 199, for a proof.

\subsection{The characteristic equation of a pencil of conics}\label{cheq}
The following definitions will play a crucial role in what follows. Let ${\cal X}=(X ,\, Y  ,\, 1 )$.
\begin{definition}\hskip 0pt\\
Given two conics $\mathcal{A}: {\cal X}{\bf M}{\cal X}^{t}=0$ and $\mathcal{B}:
{\cal X}{\bf N}{\cal X}^{t}=0$, their characteristic equation is defined as
$$ \det(\lambda {\bf N} + {\bf M})=\det({\bf N})\lambda^3+\ldots+\det({\bf M})$$ which is, if $\det({\bf N})\neq 0$, a cubic polynomial in $\lambda$ with real coefficients.
\end{definition}

Suppose now that ${\bf M}$ and ${\bf N}$ are given by:
$${\bf M}=\begin{pmatrix}
   a_{11} & a_{12} & a_{13} \\
   a_{12} & a_{22} & a_{23} \\
   a_{13} & a_{23} & a_{33}
\end{pmatrix},\quad {\bf N}=\begin{pmatrix}
   b_{11} & b_{12} & b_{13} \\
   b_{12} & b_{22} & b_{23} \\
   b_{13} & b_{23} & b_{33}
\end{pmatrix}.$$
Let
$$L_0=\det {\bf N},\:  L_3=\det(\bf M)$$
and
$${\bf M_2}=\begin{pmatrix}
   a_{11} & a_{12}   \\
   a_{12} & a_{22}  \end{pmatrix}, \:
   {\bf N_2}=\begin{pmatrix}
   b_{11} & b_{12}   \\
   b_{12} & b_{22}\end{pmatrix}, \:
   T_1=\det {\bf M_2} , \: T_2=\det {\bf N_2} .
 $$
If
$$
L_1 = L_0 {\rm Trace( {\bf N}^{-1} {\bf M})},\quad
L_2 = L_3 {\rm Trace( {\bf M}^{-1} {\bf N})}, \quad
T = T_1 {\rm Trace( {\bf M_2}^{-1} {\bf N_2})} ,
$$
then, the characteristic polynomial of $\mathcal{A}$ and $\mathcal{B}$, denoted by $F(\lambda)$, turns out to be (see \cite{Min-Chen:2008})
$$
F(\lambda)=\det(\lambda {\bf N}+ {\bf M})= L_0 \lambda^3+ L_1 \lambda^2+  L_2 \lambda +L_3,
$$
and thus, the discriminant is equal to
$$\Delta=-27 L_0^2L_3^2+18L_0L_1L_2L_3+L_1^2L_2^2-4L_1^3L_3-4L_0L_2^3.
$$

\section{Parabola and ellipse}

In a first step, we will assume that the parabola is in canonical form and the ellipse is a circle. Afterwards we will consider a general parabola and any ellipse, find the transformations that convert the parabola into canonical form and the ellipse into a circle, and rephrase the main theorem accordingly.

\subsection{Relative position. Canonical form}\label{sec:canonical}



Suppose that the parabola $\mathcal{N}$ is giving in canonical form. That is,
\begin{equation*}
{\mathcal X}= \left(x \ y \ 1\right),\quad
{\bf N}=\left( \begin {array}{ccc} {a}^{-2}&0&0\\ \noalign{\medskip}0&0&-1
\\ \noalign{\medskip}0&-1&0\end {array} \right), \quad  \mathcal{N}:\,{\mathcal X}\mathbf{N}{\mathcal X}^{t}=\dfrac{x^2}{a^2}-2y=0,
%
\end{equation*}
and $\mathcal{M}$ is a circle defined by ${\mathcal X}{\bf M}{\mathcal X}^{t}=(x-x_c)^2+(y-y_c)^2-\delta^2=0$,
$${\bf M}=\left( \begin {array}{ccc} 1&0&-{\it x_c}\\ \noalign{\medskip}0&1&-{
\it y_c}\\ \noalign{\medskip}-{\it x_c}&-{\it y_c}&-{\delta}^{2}+{{\it x_c}}^{2}+{{\it y_c}}^{2}\end {array} \right).
$$
Hence, the characteristic equation of ${\bf N}$ and ${\bf M}$ will be denoted by
$$ f(\lambda)=\det(\lambda\,{\bf N}+{\bf M})= c_0\lambda^3+c_1\lambda^2+c_2\lambda+c_3,$$
with
$$c_0=\dfrac{-1}{a^2},\, c_1=\dfrac{-\left( a^2+2\,{\it y_c}\right)}{a^2},\, c_2=-{\dfrac{\left(2\,a^2{\it y_c}+{
\delta}^2-{{\it x_c}}^2 \right)}{a^2}},\, c_3=-\delta^2.
$$
Let $\Delta$ be the discriminant of $f(\lambda)$, $\Delta'$ be the discriminant of $f'(\lambda)$, that is, $\Delta'= -12c_0c_2 + 4c_1^2$,
and
$$g(\lambda)=f(\lambda-a^2)=c_0'\lambda^3+c_1'\lambda^2+c_2'\lambda+c_3',$$
with
 \begin{align*}
 c_0'&=\dfrac{-1}{a^2}, \quad c_1'=c_1-3 c_0 a^2  = \dfrac{2(a^2 - y_c)}{a^2},\\
 c_2'&=3c_0a^4-2c_1a^2+c_2  =\dfrac{ (x_c^2 +2a^2y_c -a^4  - \delta^2 )}{a^2},\quad
c_3'= -x_c^2.
  \end{align*}

In this situation, the next proposition characterizes the relative position of $\mathcal{N}$ and $\mathcal{M}$ in terms of the roots of the characteristic equation of $\bf N$ and $\bf M$.  A point of intersection is said an ``inner tangent point" when the parabola and the circle have the same tangent line at this point, and both curves are on the same side of the tangent line. The first eight cases are introduced by Proposition 2.2 in \cite{Liu-Chen:2004}; however, the ninth case is missing and so we thought it was convenient to introduce a proof. 

\begin{proposition}\label{paracirc}\hskip 0pt\\
Consider the parabola $\mathcal{N}$ and the circle $\mathcal{M}$, as above.
\begin{enumerate}
\item  $\mathcal{M}$ and $\mathcal{N}$  are separated iff $f(\lambda) = 0$ has two
distinct positive roots.
\item  $\mathcal{M}$ and $\mathcal{N}$  are externally tangent iff $f(\lambda) = 0$ has
a positive double root.
\item  $\mathcal{M}$ is inside $\mathcal{N}$ iff $f(\lambda) = 0$ has three distinct
negative roots, two of which are not less than  $-a^2$ and one root belongs to $(-\infty,-a^2)$, or three roots are $-a^2,-a^2,-\delta^2/a^2$ when $a^2 > \delta$.
\item  $\mathcal{M}$ and $\mathcal{N}$ have only two intersection points iff $f(\lambda) = 0$ has two imaginary roots.
\item  $\mathcal{M}$ and $\mathcal{N}$ have four points of intersection iff $f(\lambda) = 0$ has three distinct negative roots which are not greater than $-a^2$.
\item $\mathcal{M}$ and $\mathcal{N}$ have two points of intersection and an inner tangent point iff $f(\lambda) = 0$ has a negative double root different from $-a^2$ and the three roots are not greater than $-a^2$, where $a^2 \leq \delta$.

\item$\mathcal{M}$ and $\mathcal{N}$ have only an inner tangent point iff
\begin{itemize}
\item   $f(\lambda) =0$ has a negative double root which is greater than $-a^2$, or,
\item  $-a^2$ is a triple root of $f(\lambda) = 0$.
\end{itemize}

\item $\mathcal{M}$ and $\mathcal{N}$ have two inner tangent points iff the roots of $f(\lambda) = 0$ are $-a^2$,$-a^2$,$-\delta^2/a^2$ where $a^2<\delta$.
 \item  $\mathcal{M}$ and $\mathcal{N}$ have only an intersection point and an inner tangent point iff $f(\lambda) = 0$ has a triple root, less than $-a^2$.
 \end{enumerate}
 \end{proposition}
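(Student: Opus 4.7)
The plan is to reduce both directions of the equivalence to explicit coefficient comparisons between two polynomials: the characteristic cubic $f(\lambda)$ and the univariate quartic obtained by substituting the standard parameterisation of the parabola into the circle's implicit equation. Setting $(x,y)=(at,t^2/2)$ in $(x-x_c)^2+(y-y_c)^2-\delta^2=0$ produces
\[
h(t)\;=\;\tfrac{1}{4}t^4+(a^2-y_c)\,t^2-2ax_c\,t+(x_c^2+y_c^2-\delta^2).
\]
Intersection points of $\mathcal{N}$ and $\mathcal{M}$ correspond bijectively to the roots of $h$, and the multiplicity of each root equals the intersection multiplicity at the corresponding point. Since $\mathcal{N}$ and $\mathcal{M}$ share no points at infinity, B\'ezout's theorem forces the four complex roots of $h$ (counted with multiplicity) to exhaust all intersections.

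For the direction $(\Rightarrow)$, under the hypothesis $\mathcal{N}$ and $\mathcal{M}$ meet at exactly two distinct real points with intersection multiplicities summing to four. Since any non-real conjugate pair of roots of $h$ contributes an even total multiplicity of at least two, the only combinatorial option is $(1,3)$: one simple real root and one real triple root. Hence $h(t)=\tfrac{1}{4}(t-\tau)^3(t-t_4)$ with $\tau,t_4\in\R$ distinct, and as $h$ has no $t^3$ term Vieta forces $t_4=-3\tau$ with $\tau\neq 0$. Matching the remaining coefficients of $h$ with the expansion of $\tfrac{1}{4}(t-\tau)^3(t+3\tau)$ gives
\[
y_c=a^2+\tfrac{3\tau^2}{2},\qquad x_c=-\tfrac{\tau^3}{a},\qquad \delta^2=\tfrac{(a^2+\tau^2)^3}{a^2}.
\]
Substituting these into the formulas for $c_0,c_1,c_2,c_3$ of Section~\ref{sec:canonical} simplifies $f(\lambda)$ to $-\tfrac{1}{a^2}\bigl(\lambda+a^2+\tau^2\bigr)^3$, whose unique root $\lambda_0=-(a^2+\tau^2)$ is triple and strictly less than $-a^2$.

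For the converse $(\Leftarrow)$ I invert this computation. Starting from $f(\lambda)=-\tfrac{1}{a^2}(\lambda-\lambda_0)^3$ with $\lambda_0<-a^2$, set $\tau^2=-\lambda_0-a^2>0$; comparing with the given expressions for $c_1,c_2,c_3$ determines $y_c,\delta^2,x_c^2$ uniquely and reproduces the three formulas above (with $x_c=\pm\tau^3/a$, the two signs merely corresponding to reflection in the $y$-axis). Plugging these back into $h$ yields $h(t)=\tfrac{1}{4}(t-\tau)^3(t+3\tau)$, so $\mathcal{N}$ and $\mathcal{M}$ meet at $(a\tau,\tau^2/2)$ with multiplicity three and transversely at $(-3a\tau,9\tau^2/2)$, the two points being distinct because $\tau\neq 0$. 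To certify that the triple contact is inner, I compute the vertical separation between the centre of $\mathcal{M}$ and the common tangent line $y=\tfrac{\tau}{a}x-\tfrac{\tau^2}{2}$ at $x=x_c$,
\[
y_c-\Bigl(\tfrac{\tau}{a}x_c-\tfrac{\tau^2}{2}\Bigr)\;=\;a^2+2\tau^2+\tfrac{\tau^4}{a^2}\;>\;0,
\]
so the centre lies strictly above the tangent; since the parabola is convex it lies above as well, placing both curves on the same side of their common tangent near the contact point.

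The main obstacle I anticipate is organisational rather than conceptual: carrying out the symbolic substitutions cleanly and verifying that the values recovered in the $(\Leftarrow)$ direction are consistent (in particular that $\delta^2>0$ and $x_c^2\geq 0$). The key identity is $x_c^2=-(a^2+\lambda_0)^3/a^2$, which is positive precisely when $\lambda_0<-a^2$; this strict inequality is exactly what separates item~9 from the boundary case $\lambda_0=-a^2$, in which $\tau=0$, the two intersection points collapse at the vertex of $\mathcal{N}$ with four-fold contact, and the configuration is governed by a different item of the proposition.
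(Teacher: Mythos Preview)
Your argument is correct and follows the same underlying strategy as the paper: parameterise the parabola, substitute into the circle to obtain a quartic in the parameter, and exploit the resulting relationship between that quartic's factorisation and the roots of $f(\lambda)$. The only organisational difference is that the paper first derives the general closed form $r_i=-a^2-(x_j+x_k)^2/(4a^2)$ for the roots of $f$ in terms of the $x$-coordinates of three (symbolic) intersection points and then specialises by a case analysis on which triple of the four points coincide, whereas you immediately impose the $(t-\tau)^3(t+3\tau)$ factorisation, solve for $(x_c,y_c,\delta)$ in terms of $\tau$, and compute $f$ directly; your route is shorter for item~9 in isolation, while the paper's general root formula is a reusable tool that also underlies the other items of the proposition.
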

\begin{proof}
As we said previously, we must prove Point 9 (see \cite[Proposition 2.2]{Liu-Chen:2004} for the rest).

In general, the parabola $\mathcal{N}$ and the circle $\mathcal{M}$ have four points of intersection in $\mathbb{C}^2$, maybe equal depending on the multiplicities. Suppose that three of these possibly non-real  intersection points are given symbolically as $ p_1=\left( x_1  , \dfrac{ x_1^2}{2 a^2} \right) $, $ p_2=\left( x_2  , \dfrac{ x_2^2}{2 a^2} \right) $ and $ p_3=\left( x_3  , \dfrac{ x_3^2}{2 a^2} \right) $. After some computations, another equation for the circle $\mathcal{M}$ is provided by the circle defined by these three points,
$$
\mathcal{M}: x^{2}+y^{2}+d_1\,x+d_2\,y+d_3=0,
$$
with
\begin{eqnarray*}
d_1&=&\frac{\left(x_2 +x_3 \right) \left(x_1 +x_3 \right) \left(x_1 +x_2 \right) }{4 a^{4}},\\
d_2&=&-\frac{ \left(4 a^{4}+x_1^{2}+x_1 x_2 +x_1 x_3 +x_2^{2}+x_3 x_2 +x_3^{2}\right)}{2 a^{2}},\\
d_3&=&-\frac{x_1 x_2 x_3 \left(x_1 +x_2 +x_3 \right)}{4 a^{4}}.
\end{eqnarray*}

Once we have this equation, the characteristic polynomial of $\mathcal{N}$ and $\mathcal{C}$ can also be expressed  in these terms,
$$
f(\lambda)=\dfrac{-1}{a^2}(\lambda-r_1 )(\lambda-r_2 )(\lambda-r_3 ),$$ 
with 
$$
r_1=-a^2 - \dfrac{(x_1 + x_2)^2}{4a^2}, r_2=-a^2 - \dfrac{(x_1 + x_3)^2}{4a^2}, r_3= 
-a^2- \dfrac{(x_2 + x_3)^2}{4a^2};
$$
In addition, as we mentioned before, the intersection of the circle and the parabola does not consist only of three points but four. The one that remains to be given is equal to
$p_4=\left( -x_1 - x_2 - x_3, \dfrac{ (-x_1 - x_2 - x_3)^2}{2a^2}\right)$.

\bigskip
Now suppose that $\mathcal{M}$ and $\mathcal{N}$ have only an intersection point and an inner tangent point. If we denote $P_1$ and $P_3$ such points such that $\mathcal{M}$ meets $\mathcal{N}$ with multiplicity 3 at $P_3$ and with multiplicity 1 at $P_1$, then, according to the above computations, the values of $P_3$ and $P_1$ can be:

$P_3=p_1=p_2=p_3$ and $P_1=p_4$ with $P_3\neq P_1$. In this case, $x_1=x_2=x_3\neq 0$ and so $-a^2 - x_1^2/a^2$ is triple root of $f$ less than $-a^2$.

$P_3=p_1=p_2=p_4$ and $P_1=p_3$ with $P_3\neq P_1$. In this case, $x_1=x_2$, $x_3 = -3x_2\neq 0$ and so, $-a^2 - x_2^2/a^2$ is triple root of $f$ less than $-a^2$.

$P_3=p_1=p_3=p_4$ and $P_1=p_2$ with $P_3\neq P_1$. In this case, $x_1=x_3,\,x_2=-3x_3\neq 0$ and so, $-a^2 - x_3^2/a^2$ is triple root of $f$ less than $-a^2$.

$P_3=p_2=p_3=p_4$ and $P_1=p_1$ with $P_3\neq P_1$. In this case, $x_2=x_3$, $x_1=-3x_3\neq 0$ and so, $-a^2 - x_3^2/a^2$ is triple root of $f$ less than $-a^2$.

Therefore, in all the cases we find a triple root of $f$ less than $-a^2$.


\bigskip
Going the other way, suppose that $f(\lambda) = 0$ has a triple root, less than $-a^2$. Obviously that implies that $r_1=r_2=r_3, r_1<-a^2$, such that each solution of this system gives rise to two different points of intersection, one of multiplicity three and the other of multiplicity one. 

Namely, the set of solutions of the system of equations $r_1=r_2=r_3$ with $r_1<-a^2$ is described in terms of the variable $x_3$ as follows,
\begin{eqnarray*} 
 x_1 = x_3, & x_2 = x_3,& x_3  \neq 0,\\  
 x_1 = -\dfrac{x_3}{3}, &x_2 = -\dfrac{x_3}{3},& x_3 \neq 0,\\ 
 x_1 = -3x_3, &x_2 = x_3,& x_3 \neq 0, \\ 
 x_1= x_3, &x_2 = -3x_3,& x_3\neq 0;
 \end{eqnarray*}
 Hence, 
 \begin{itemize}
 \item For the solution  $x_1 = x_3, x_2 = x_3, x_3 \neq 0$, the intersection point of multiplicity three is 
 $\left(x_3,  \frac{x_3^2}{2a^2}\right)$ and the one of multiplicity one is $\left(-3x_3,  \frac{9x_3^2}{2a^2}  \right)$.
 
  \item For the solution $x_1 = -\dfrac{x_3}{3}, x_2 = -\dfrac{x_3}{3}, x_3 \neq 0$, the intersection point of multiplicity three is 
   $\left( -\frac{x_3}{3},\frac{  x_3^2}{18a^2}   \right)$ and the one of multiplicity one is $\left( x_3,  \frac{ x_3^2}{2a^2} \right)$.
 
  \item For the solution $x_1 = -3x_3, x_2 = x_3, x_3 \neq 0$, the intersection point of multiplicity three is 
   $\left(   x_3,\frac{x_3^2}{2a^2} \right)$ and the one of multiplicity one is $\left( -3x_3, \frac{9x_3^2}{2a^2}\right)$.

  \item For the solution  $x_1 = x_3, x_2 = -3x_3, x_3\neq 0$, the intersection point of multiplicity three is 
   $\left(  x_3, \frac{x_3^2}{2a^2}  \right)$ and the one of multiplicity one is $\left(-3x_3,  \frac{ 9x_3^2}{2a^2} \right)$.

 \end{itemize}
 
Then it is proved that $\mathcal{M}$ and $\mathcal{N}$ have only an intersection point and an inner tangent point.

\end{proof}

\begin{remark}\begin{enumerate}
\item 
We should add that Points 6., 7. and 8. of Proposition \ref{paracirc} are slightly different from their respective points of Proposition 2.2 in \cite{Liu-Chen:2004} and  the reasons are as follows. On the one hand, Point 8. in \cite{Liu-Chen:2004}  is included in Point 6, and on the other hand, Point 7. in \cite{Liu-Chen:2004}  does not include the case when $-a^2$ is a triple root of $f(\lambda) = 0$. It is easy to see that this happens if and only if $x_c=0, a^2=y_c=\delta$ and the intersection inner point is equal to $(0,0)$; in fact, this case corresponds to the fact that the circle $\mathcal{M}$ is the osculating circle of the parabola at the parabola's vertex $(0,0)$ (see Exercise 3.2.7 in \cite{Univconic}).
\item  In Case 9, the circle $\mathcal{M}$ is the osculating circle of the parabola at the tangent point (see Chapter 3 in \cite{Univconic}). 
\end{enumerate}

\end{remark}
This proposition is the reason why we say that we have reduced the problem at hand into a Quantifier Elimination problem since finally we are looking for the conditions the $c_i$ or $c_i'$ must verify in order that the real roots of $f(\lambda)$ have a prescribed behaviour. In this way, next theorem reformulates the previous proposition in terms of the sign of $\Delta$ and the signs of the coefficients of $f(\lambda)$ and $g(\lambda)$, with the great advantage of not having to calculate the roots of the characteristic polynomial. Although the closed formulas for roots of cubic polynomials (Cardano's Formulae) are known, numerically speaking it is much better to avoid their calculation if it is not absolutely necessary. 

\newpage

\begin{theorem}\label{formulaCanonical}\hskip 0pt\\
Consider the parabola $\mathcal{N}$ and the circle $\mathcal{M}$, as above.
\begin{enumerate}
\item  $\mathcal{M}$ and $\mathcal{N}$  are separated iff $\Delta >0$, and $c_1 >0$ or $c_2 >0$.
\item  $\mathcal{M}$ and $\mathcal{N}$  are externally tangent iff $\Delta =0$, and $c_1 >0$ or $c_2 >0$.

\item  $\mathcal{M}$ is inside $\mathcal{N}$ iff either $\Delta >0$, $c_1 <0$, $c_2<0,$ and
the number of sign changes in $-c_0$, $c_1'$, $-c_2'$, $c_3'$ is 1; or $c_3'=0$, $c_2'=0$, $a^2>\delta$.

\item  $\mathcal{M}$ and $\mathcal{N}$ have only two intersection points iff $\Delta <0$.

\item  $\mathcal{M}$ and $\mathcal{N}$ have four points of intersection iff $\Delta >0, c_2'< 0, c_1'<0 . $

\item  $\mathcal{M}$ and $\mathcal{N}$ have two points of intersection and an inner tangent point iff $a^2 \leq \delta$,
$\Delta=0$, $\Delta'\neq 0$, $c_2' <0, c_1' <0$.

\item  $\mathcal{M}$ and $\mathcal{N}$ have only an inner tangent point iff
\begin{itemize}
\item either $\Delta=0$, $c_1 <0,c_2 <0$, $c_3'<0$, and $c_2' >0$ or $c_1'>0$; 
\item or $\Delta=0$, $c_1 <0,c_2 <0$, $c_3' =0$, $c_1' >0$, $c_2' <0$.  
\item or $c_1'=c_2'=c_3'=0.$

\end{itemize}

\item  $\mathcal{M}$ and $\mathcal{N}$ have two inner tangent points iff $a^2<\delta$,
$c_3' =0$, and $c_2'=0$.

 \item  $\mathcal{M}$ and $\mathcal{N}$ have only an intersection point and an inner tangent point iff $\Delta=0$, $\Delta'=0$, $c_1'<0$ and $c_3'<0$. 
\end{enumerate}
\end{theorem}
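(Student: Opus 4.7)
The plan is to establish Theorem \ref{formulaCanonical} by translating, case by case, the root-theoretic characterizations of Proposition \ref{paracirc} into sign conditions on $\Delta$, $\Delta'$ and the coefficients of $f$ and $g$. Three tools do essentially all the work: Lemma \ref{ellipsesdisc} counts distinct real roots of the cubic $f$ via the sign of $\Delta$; the discriminant $\Delta'$ of the quadratic $f'$ distinguishes a triple root of $f$ (where $\Delta'=0$) from a genuine double-plus-simple configuration (where $\Delta'>0$) once $\Delta=0$; and Descartes' rule (Proposition \ref{realr}) reads off the numbers of positive and negative real roots from sign patterns. To locate the roots of $f$ relative to $-a^2$, I would apply Descartes not only to $f$ but also to $g(\lambda)=f(\lambda-a^2)$, using that a root of $f$ is $>-a^2$, $=-a^2$, or $<-a^2$ precisely when the corresponding root of $g$ is positive, zero, or negative.

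Before the casework I would record a few standing facts. Since $c_0=-1/a^2<0$ and $c_3=-\delta^2<0$, the product of the roots of $f$ equals $-c_3/c_0=-a^2\delta^2<0$, so $f$ has an odd number of negative real roots. The sequence $(c_0,c_1,c_2,c_3)$ admits either $0$ or $2$ sign changes, so by Descartes the number of positive real roots of $f$ is $0$ or $2$, and it is $0$ iff $c_1\leq 0$ and $c_2\leq 0$. On the shifted side, $c_0'<0$ and $c_3'=-x_c^2\leq 0$; the equality $c_3'=0$ is the algebraic trace of $-a^2$ being a root of $f$, while $c_3'=c_2'=0$ forces $-a^2$ to be a double root of $f$ and the remaining root to be $-\delta^2/a^2$, which exceeds $-a^2$ iff $a^2>\delta$.

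With these ingredients each case becomes essentially mechanical. Cases 1 and 4 follow directly from Lemma \ref{ellipsesdisc} together with the Descartes count on $f$; Case 2 is the analogous $\Delta=0$ statement, the triple-positive-root scenario being excluded by the negativity of the product of roots. Cases 3 and 5--9 concern configurations in which every real root of $f$ lies in $(-\infty,0)$, which by the observation above forces $c_1\leq 0$ and $c_2\leq 0$; here I would switch to $g$ and count its positive and negative roots via Descartes applied to $(c_0',c_1',c_2',c_3')$ and to $(-c_0',c_1',-c_2',c_3')$, while the pair $(\Delta,\Delta')$ decides whether the real roots of $g$ are all distinct, form a double-plus-simple pair, or coincide at a triple root. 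The boundary equalities $c_3'=0$ and $c_3'=c_2'=0$ capture the configurations where $-a^2$ itself is a root of $f$, and the sign of $a^2-\delta$ only enters through those degenerate regimes, where it controls the sign of the remaining $g$-root and reflects the curvature comparison between circle and parabola at the vertex or tangent point.

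The main obstacle is the treatment of cases 6, 7 and 8, where $\Delta=0$ and one must pin down the exact location of the double (or triple) root relative to $-a^2$, together with case 3, where one must simultaneously control whether the isolated root lies strictly below $-a^2$ or whether two roots collapse onto $-a^2$. In particular, case 7 splits into the three algebraically distinct regimes listed in the statement, and separating them demands careful bookkeeping of which sign patterns of $(c_0',c_1',c_2',c_3')$ and of $(-c_0',c_1',-c_2',c_3')$ are compatible with each geometric configuration. Verifying that no spurious sign pattern produces an unintended case, and that the transitions across $a^2=\delta$ and across $c_3'=0$ are each covered exactly once, is the only part of the argument where one cannot avoid an explicit inspection.
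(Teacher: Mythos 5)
Your proposal is correct and follows essentially the same route as the paper: Theorem \ref{formulaCanonical} is there presented as a direct translation of Proposition \ref{paracirc} obtained from Lemma \ref{ellipsesdisc}, the discriminant $\Delta'$ of $f'$, and Descartes' rule applied to $f$ and to $g(\lambda)=f(\lambda-a^2)$ (the paper leaves this bookkeeping implicit for the parabola but carries it out explicitly for the analogous hyperbola statement, Proposition \ref{ch2}, with the same toolkit). Your standing observations --- the negativity of the product of the roots of $f$, the fact that $c_3'=-x_c^2\le 0$ detects $-a^2$ as a root, that $c_3'=c_2'=0$ forces $-a^2$ to be a double root with remaining root $-\delta^2/a^2$, and that the sign of $a^2-\delta$ only matters in these degenerate regimes --- are exactly the facts needed to make each case mechanical.
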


Comparing Theorem \ref{formulaCanonical} and results of Section III in \cite{Min-Chen:2008}, one can see that the number of conditions have been reduced in Points 3., 5., 6. and 8., and Point 7. has been completed.
\subsection{Relative position of a parabola and an ellipse}\label{conics}
Assume now that the ellipse $\mathcal{M}$ and the parabola $\mathcal{N}$ are defined, respectively, by the symmetric matrices (not in canonical form)
$$
{\bf M}=\begin{pmatrix}
   a_{11} & a_{12} & a_{13} \\
   a_{12} & a_{22} & a_{23} \\
   a_{13} & a_{23} & a_{33}
\end{pmatrix},\quad
{\bf N}=\begin{pmatrix}
   b_{11} & b_{12} & b_{13} \\
   b_{12} & b_{22} & b_{23} \\
   b_{13} & b_{23} & b_{33}
\end{pmatrix}.
$$
We aim to characterize their relative position directly from their characteristic equation, as in \cite{Min-Chen:2008}, Section IV,  although our results are different from the results in \cite{Min-Chen:2008}.

Following the definitions introduced in Section \ref{cheq}, since $\mathcal{M}$ is a real ellipse, we have $T_1>0$ and $(a_{11}+a_{22}) L_3<0$. Since $\mathcal{N}$ is a parabola,  $T_2 =0$. Assume that
$$
a_{11}>0, a_{22}>0, L_3<0, L_0<0,
$$
and the interior of the ellipse $\mathcal{M}$ is defined by $\mathcal{X}M\mathcal{X}^{T}<0$. Recall that
$$
F(\lambda)=\det(\lambda {\bf N}+ {\bf M})= L_0 \lambda^3+ L_1 \lambda^2+  L_2 \lambda +L_3.
$$

Since $M_2$ is symmetric, it is real diagonalizable and then, there exist an orthogonal matrix $P$ and a diagonal matrix $D$ such that $P^tM_2P=D={\rm diag}(b_0,b_2)$, $D$ defined by the eigenvalues of $M_2$,
$$
b_0=\dfrac{{\rm Trace }\, M_2 +\sqrt{\nu_a}}{2},\;\,
b_2=\dfrac{{\rm Trace }\, M_2 -\sqrt{\nu_a}}{2},\;\, \nu_a = (a_{11}-a_{22})^2 +4a_{12}^2,
%
%
%
$$
and the columns of $P$ by the corresponding orthonormal eigenvectors.

Hence the matrix
$$
\left(\begin{array}{cc}P^t&  \\ & 1\end{array}\right)
$$
defines an isometry such that
$$
{\cal X}{\bf M}{\cal X}^{t}={\cal X}_1{\bf B}{\cal X}_1^{t}=b_0X_1^2+b_2Y_1^2+\cdots,
$$
with
$${\cal X}_1^{t}=\left(\begin{array}{cc}P^t&  \\ & 1\end{array}\right){\cal X}^t,\quad {\bf B}=\left(\begin{array}{cc}P^t&  \\ & 1\end{array}\right)\bf M\left(\begin{array}{cc}P&  \\ & 1\end{array}\right) = \begin{pmatrix}
   b_{0} & 0 & b_{3} \\
  0 & b_{2} & b_{4} \\
   b_{3} & b_{4} & b_{5}
\end{pmatrix},$$

$$b_{5}=a_{33}, \, (b_3,b_4)=(a_{31},a_{32})P .$$

Since this transformation is defined by an isometry, it follows that $b_0 b_2>0$ and $(b_0+b_2)\det {\bf B}=(b_0+b_2)\det {\bf M}<0$, and, consequently, both $b_0, b_2$ must be positive. Next, consider another affine transformation defined by the diagonal matrix ${\rm diag}(\sqrt{b_0},\sqrt{b_2},1)$ and, after completing the square, we obtain
\begin{align*}
 {\cal X}{\bf M}{\cal X}^{t}&={\cal X}_1{\bf B}{\cal X}_1^{t}\\
 &={\cal X}_2\,{\rm diag}(1/\sqrt{b_0},1/\sqrt{b_2},1) \,{\bf B}\,{\rm diag}(1/\sqrt{b_0},1/\sqrt{b_2},1) {\cal X}_2^{t}\\ &=(X_2-X_c)^2+(Y_2-Y_c)^2+\frac{L_3}{T_1},
\end{align*}
with
$${\cal X}_2^t = {\rm diag}(\sqrt{b_0},\sqrt{b_2},1){\cal X}_1^t, \quad X_c=-\frac{b_3}{\sqrt{b_0}},\quad Y_c=-\frac{b_4}{\sqrt{b_2}}\, .
$$
Next, we denote by $\mathcal{M}_1$ the circle
$$
\mathcal{M}_1 : (X_2-X_c)^2 +(Y_2-Y_c)^2 = \frac{-L_3}{T_1},
$$
and by $\mathcal{S}_1$ the affine transformation which transforms $\mathcal{M}$ into $\mathcal{M}_1$, defined by the matrix
$$
{\bf A}= {\rm diag}(\sqrt{b_0},\sqrt{b_2},1) \left(\begin{array}{cc}P^t&  \\ & 1\end{array}\right)\, .
$$
\medskip
In parallel, the parabola $\mathcal{N}$ is transformed by $\mathcal{S}_1$ into $\mathcal{N}_1$,
$$
\mathcal{N}_1  :d_{11} X_2^2 + d_{22}Y_2^2+2 d_{12}X_2Y_2+2d_{13}X_2+2d_{23}Y_2 +d_{33} =0.
$$
As a consequence, if ${\bf N_1}$ and ${\bf M_1}$ are the matrices which define $\mathcal{N}_1$ and $\mathcal{M}_1$ respectively, the characteristic equation $F(\lambda)$ can be written as
$$
F(\lambda)= (\det \bf A)^2  \det (\lambda {\bf N_1}+ {\bf M_1}).
$$
The next step is to obtain the canonical form of $\mathcal{N}_1$. Under another affine transformation $\mathcal{S}_2$ (rigid motion), we obtain
$$
\mathcal{M}_2  : (x'-x'_c)+(y'-y'_c)^2 = \dfrac{-L_3}{T_1},\quad \mathcal{N}_2  :E_1 x'^2 -2 E_2 y' =0,$$
with
$$
E_1 = d_{11}+d_{22}=\dfrac{T}{T_1},
\quad
E_2=\dfrac{\sqrt{ -L_0 } }{\sqrt{ T }} .
$$

By dividing the equation of $\mathcal{N}_2$ by $E_2$, we obtain the same form as in Section~\ref{sec:canonical}, with ${a^2}={E_2}/{E_1}$.
Thus, if ${\bf \hat{N}_2}$ and ${\bf \hat{M}_2}$ denote the matrices which define $\mathcal{N}_2$ and $\mathcal{M}_2$ respectively,
we have
$$f(\lambda) =
\det \left( \lambda\dfrac {\bf \hat{N}_2}{E_2}+ {\bf \hat{M}_2}\right)=c_0 \lambda^3+ c_1 \lambda^2+  c_2 \lambda +c_3,
$$
with
$$
c_0=(\det {\bf A})^{-2} \dfrac{L_0}{E_2^3}=- \dfrac{1}{a^2} ,  \,c_1=(\det {\bf A})^{-2}  \dfrac{L_1}{E_2^2} , \, c_2=(\det {\bf A})^{-2} \dfrac{ L_2}{E_2},$$
and $c_3=(\det {\bf A})^{-2} L_3$.

Note that, in order to apply Theorem \ref{formulaCanonical}, we have ${\rm sign}(c_i)={\rm sign}(L_i)$ ($i=0,\ldots,3$).

The coefficients of $g(\lambda)=f(\lambda-a^2)=c_0'\lambda^3+c_1'\lambda^2+c_2'\lambda+c_3'$ will be
\begin{align*}
c_0'&= c_0, \quad c_1' =\left(\det {\bf A}\right)^{-2} \left(-3 L_0 \dfrac{a^2}{E_2^3} + \dfrac{L_1}{E_2^2}\right), \\
  c_2'&=(\det {\bf A})^{-2}\left(3 L_0  \dfrac{ a^4}{E_2^3} - 2 L_1   \dfrac{a^2}{E_2^2} +  \dfrac{L_2 }{E_2}\right), \\
  c_3'&=(\det {\bf A})^{-2}\left(-L_0   \dfrac{a^6}{E_2^3} + L_1   \dfrac{a^4}{E_2^2} - L_2   \dfrac{a^2}{E_2} +  L_3\right).
\end{align*}

Since $T>0$, $E_2>0$, $a^2=E_2/E_1$ and $E_1=T/T_1$, we infer that $c_1'$ has the same sign as
$$I_5= -3 L_0 T_1 + L_1 T,$$
$c_2'$ has the same sign as $$I_4= 3 L_0 T_1^2 - 2L_1 T T_1+L_2 T^2 $$
and $c_3'$ has the same sign as
$$I_3= -L_0 T_1^3+L_1 T T_1^2-L_2 T^2 T_1+L_3 T^3 . $$
Let
$$
I_2= -T_1^3L_0+L_3T^3.
$$

After all these calculations, we are now ready to state a version of Theorem \ref{formulaCanonical} when the parabola and ellipse are not given in canonical form.


\begin{theorem}\label{ep}\hskip 0pt\\
The relationship between the ellipse $\mathcal{M}$ and the parabola $\mathcal{N}$ are as follows:
\begin{enumerate}
\item $\mathcal{M}$ and $\mathcal{N}$ are separated iff $\Delta>0$ and $(L_1>0 \;\hbox{or}\; L_2>0)$.
\item $\mathcal{M}$ and $\mathcal{N}$  are externally tangent iff $\Delta=0$ and $(L_1>0  \;\hbox{or}\;  L_2>0)$. 
\item  $\mathcal{M}$  is inside $\mathcal{N}$ iff   
$\{\Delta>0 ,  L_1<0 ,  L_2<0 ,  I_4>0 \}$
or
$\{\Delta>0, L_1<0, L_2<0, I_5\geq 0,I_4\leq 0, I_3<0\}$,
or
$\{ I_3 =0,I_4 =0,I_5  >0\}$.

\item $\mathcal{M}$ and $\mathcal{N}$  have only two points of intersection iff $\Delta<0$.
\item $\mathcal{M}$ and $\mathcal{N}$  have four points of intersection iff $\Delta>0$, $I_4 <0$ and $I_5 <0$.
\item $\mathcal{M}$ and $\mathcal{N}$ have one inner tangent point and two points of intersection points iff $I_2\leq 0, \Delta=0, \Delta' \neq 0, I_4<0,I_5<0$.
\item $\mathcal{M}$ and $\mathcal{N}$  have only one inner tangent point iff 
{ $\Delta=0$, $L_1<0$, $L_2<0$, $I_3<0$ and $(I_4>0 \;\hbox{or}\; I_5>0 )$}, 
or 
$\{\Delta=0$, $L_1<0$, $L_2<0$, $I_3=0$, $I_5>0$, $I_4<0 \}$,
or 
$\{I_3=0$, $I_5=0$, $I_4=0\}$.

\item $\mathcal{M}$ and $\mathcal{N}$ have only two inner points of tangency points iff $I_3 =0$, $I_4 =0$ and $I_5<0$.

 \item  $\mathcal{M}$ and $\mathcal{N}$ have only an intersection point and an inner tangent point iff $\Delta=0$, $\Delta'=0$, $I_3<0$ and $I_5<0$.

\end{enumerate}
\end{theorem}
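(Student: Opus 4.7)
The plan is to reduce Theorem \ref{ep} to Theorem \ref{formulaCanonical} by exploiting the affine chain $\mathcal{S}_2\circ\mathcal{S}_1$ constructed in the preceding subsection, which transforms $(\mathcal{M},\mathcal{N})$ into a pair $(\mathcal{M}_2,\mathcal{N}_2)$ consisting of a circle and a parabola in canonical form. Invertible affine maps preserve intersection multiplicities, tangency type (including the ``inner'' versus ``external'' distinction, since they preserve the side of the common tangent on which each curve lies) and the ``is inside'' relation, so the nine geometric cases listed in Theorem \ref{ep} correspond bijectively to the nine cases of Theorem \ref{formulaCanonical} applied to $(\mathcal{M}_2,\mathcal{N}_2)$. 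It therefore suffices to rewrite each canonical condition in terms of the invariants of ${\bf M}$ and ${\bf N}$.

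For the coefficient translations I would invoke the closed-form expressions already derived in Section~\ref{conics}, namely
\[
c_i=(\det{\bf A})^{-2}\,E_2^{\,i-3}\,L_i\qquad(i=0,1,2,3),
\]
together with the displayed identities showing that $c_1', c_2', c_3'$ equal $I_5, I_4, I_3$ respectively up to strictly positive factors (positive powers of $T$, $T_1$, $E_2$ and $(\det{\bf A})^{-2}$, using $T_1>0$ and $T>0$). These give the sign dictionary
\[
{\rm sign}(c_i)={\rm sign}(L_i),\quad {\rm sign}(c_1')={\rm sign}(I_5),\quad {\rm sign}(c_2')={\rm sign}(I_4),\quad {\rm sign}(c_3')={\rm sign}(I_3).
\]
The two discriminants $\Delta(c_0,\ldots,c_3)$ and $\Delta'(c_0,c_1,c_2)$ are quasi-homogeneous with $E_2$-weights $(-3,-2,-1,0)$, so substituting the identity above multiplies them by strictly positive quantities. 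Hence evaluating $\Delta$ and $\Delta'$ on $(L_0,L_1,L_2,L_3)$, as in Theorem~\ref{ep}, yields the same signs required by Theorem~\ref{formulaCanonical}.

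Next I would translate the purely metric conditions ``$a^2\leq\delta$'' and ``$a^2<\delta$'' appearing in Points 6, 7 and 8 of Theorem~\ref{formulaCanonical}. From $a^2=E_2/E_1=T_1\sqrt{-L_0}/T^{3/2}$ and $\delta=\sqrt{-L_3/T_1}$ (both positive), squaring $a^2\leq\delta$ and clearing denominators yields $-T_1^3L_0+T^3L_3\leq 0$, that is, $I_2\leq 0$; the strict version gives $I_2<0$. This explains the appearance of $I_2$ in Point 6 and, combined with the sign dictionary above, also accounts for the strict condition $I_5>0$ in the third alternatives of Points 3 and 8: when $c_3'=c_2'=0$ one has $g(\lambda)=c_0'\lambda^2(\lambda-a^2 c_1')$, and $c_1'>0$ is precisely $-\delta^2/a^2>-a^2$, i.e.\ $a^2>\delta$.

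The remaining work is a mechanical item-by-item verification that each condition of Theorem~\ref{ep} matches its counterpart in Theorem~\ref{formulaCanonical} under the dictionary above, with the sign-change conditions of Descartes' type (for example ``${\bf Var}(-c_0,c_1',-c_2',c_3')=1$'' in Point 3) converted into conjunctions of strict and non-strict inequalities on $I_3$, $I_4$, $I_5$ by enumerating the admissible sign patterns. The main obstacle is precisely this bookkeeping in Points 3 and 7, where several logically distinct algebraic subcases produce the same geometric configuration; the only step that goes beyond plain sign substitution is the observation that ``$-a^2$ is a triple root of $f$'', i.e.\ $g(\lambda)=c_0'\lambda^3$, is equivalent to $c_1'=c_2'=c_3'=0$, which via the dictionary is exactly $I_3=I_4=I_5=0$, matching the third bullet in Point 7.
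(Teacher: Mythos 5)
Your proposal is correct and follows essentially the same route as the paper: the paper's ``proof'' of Theorem~\ref{ep} is precisely the derivation preceding it in Section~\ref{conics} (the affine chain $\mathcal{S}_2\circ\mathcal{S}_1$, the identities $c_i=(\det{\bf A})^{-2}L_i/E_2^{3-i}$, the sign correspondences $c_1'\sim I_5$, $c_2'\sim I_4$, $c_3'\sim I_3$, and the definition of $I_2$ to encode $a^2\leq\delta$), after which the translation of Theorem~\ref{formulaCanonical} is left as the same sign-pattern bookkeeping you describe. The only nitpick is your phrasing about ``the strict condition $I_5>0$ in the third alternatives of Points 3 and 8'': in Point 8 the relevant condition is $I_5<0$ (i.e.\ $a^2<\delta$), though your underlying observation that ${\rm sign}(c_1')={\rm sign}(a^2-\delta)$ when $c_2'=c_3'=0$ is exactly what justifies both occurrences.
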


We recover here the same formulae introduced in \cite{AET,CDFG} for characterizing when two ellipses are separated or they are externally tangent. This is not a surprise since the root pattern for these two geometric configurations is the same as in the case we have considered here. The next picture illustrates all relations described in Theorem \ref{ep}.
\begin{figure}[H]
\begin{center}
 \fbox{\includegraphics[scale=.17]{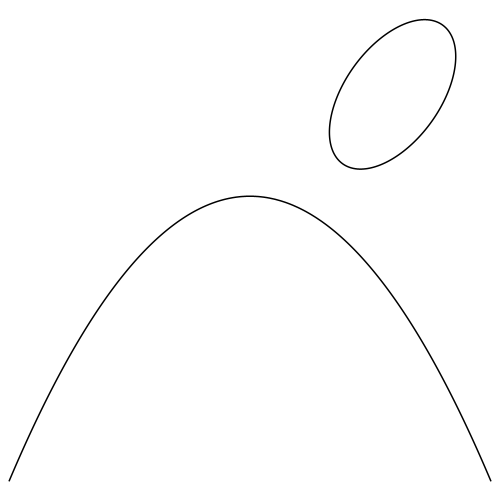} {\bf 1}} \hskip 3pt
  \fbox{\includegraphics[scale=.17]{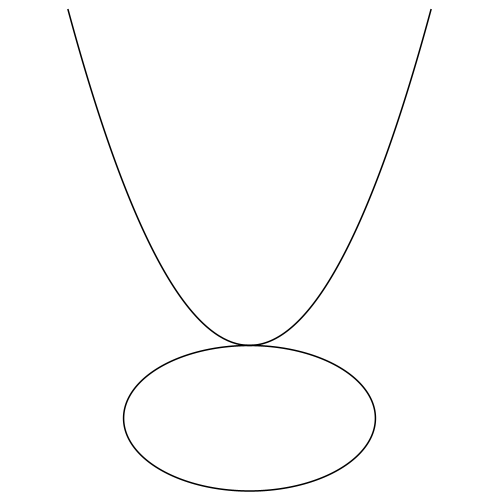} {\bf 2}} \hskip 3pt
 \fbox{\includegraphics[scale=.17]{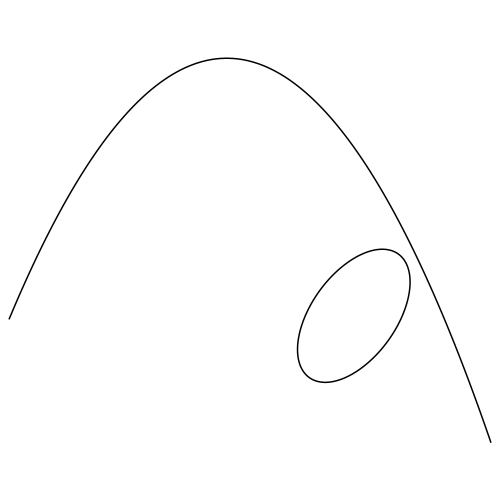} {\bf 3}} \hskip 3pt
 \fbox{\includegraphics[scale=.17]{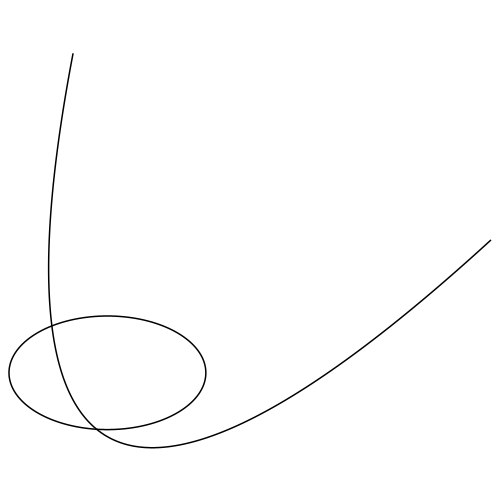} {\bf 4}} \\
 \vspace{2pt}
  \fbox{\includegraphics[scale=.17]{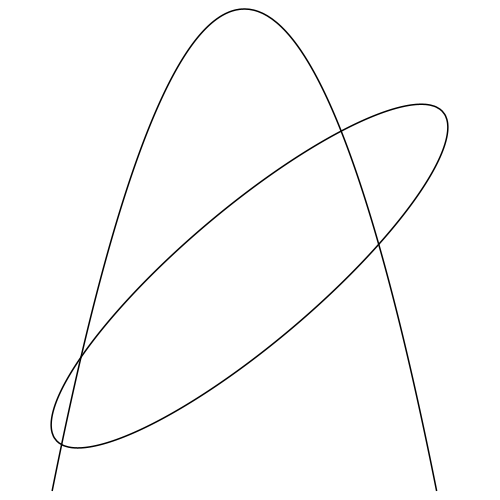} {\bf 5}} \hskip 3pt
   \fbox{\includegraphics[scale=.17]{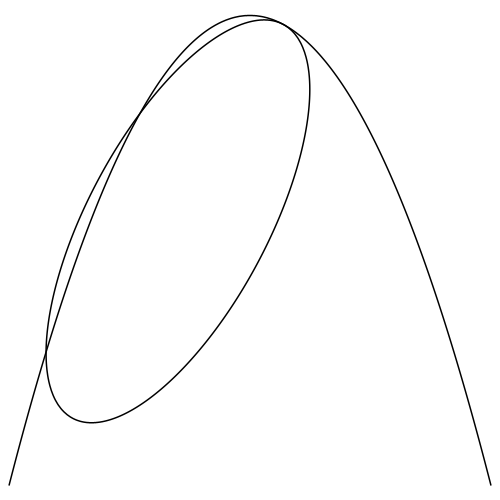} {\bf 6}} \hskip 3pt
 \fbox{\includegraphics[scale=.17]{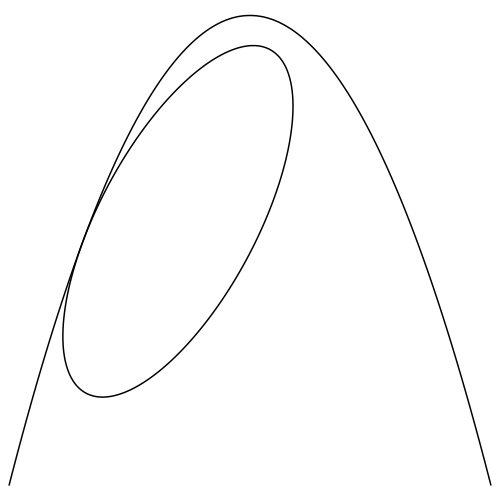} {\bf 7}} \hskip 3pt
  \fbox{\includegraphics[scale=.17]{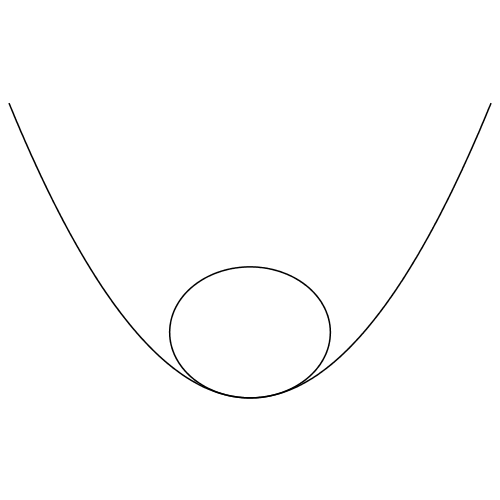} {\bf 7}} \hskip 3pt 
  \fbox{\includegraphics[scale=.17]{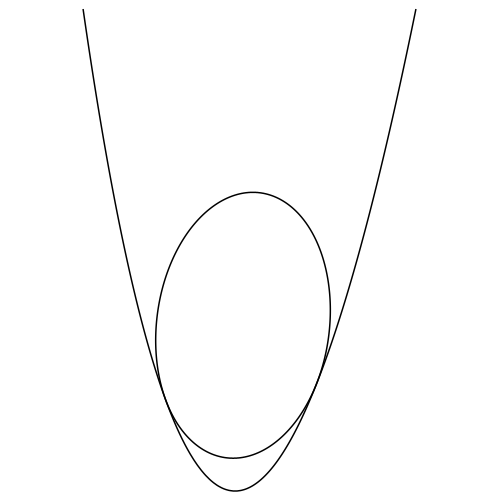} {\bf 8}}
   \fbox{\includegraphics[scale=.17]{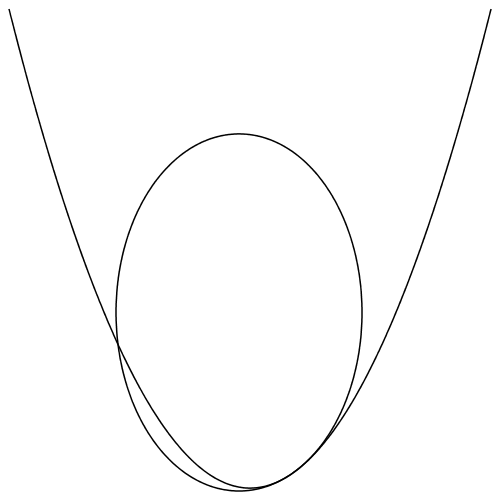} {\bf 9}} 
\end{center}
\caption{Ellipse and Parabola - Theorem \ref{ep}}
\end{figure}

\begin{remark}
Theorem 1 in  \cite{Min-Chen:2008} is apparently similar to Theorem \ref{ep}, however, our definitions of $I_3$, $I_4$ and $I_5$ are different and the number of conditions have been reduced by Theorem \ref{formulaCanonical}.
\end{remark}

\section{Hyperbola and ellipse}

Similarly to the treatment of the parabola and ellipse case, we will first consider the relative position of a hyperbola in canonical form and a circle. Then we will consider any hyperbola and any ellipse, find the transformations that convert the hyperbola into canonical form and the ellipse into a circle, and rephrase the main results accordingly. Our approach could be used for example in \cite{Ding}, to better differentiate between the relative position of these two conics.

\subsection{Relative position. Canonical form}\label{sec:canonicalEH}

Given a hyperbola in canonical form $\mathcal{NH} : {\cal X}{\bf Nh}{\cal X}^{t}=\dfrac{x^2}{a^2}-\dfrac{y^2}{b^2}+1=0$, with
$$
{\bf Nh}=\left( \begin {array}{ccc} a^{-2}&0&0\\ \noalign{\medskip}0&-{b}^{-2}&0
\\ \noalign{\medskip}0&0&1\end {array} \right),
$$
$a>0,b>0$, and a circle $\mathcal{M}:{\cal X}{\bf M}{\cal X}^{t}=0$, with

$$
{\bf M}=\left( \begin {array}{ccc} 1&0&-{\it x_c}\\ \noalign{\medskip}0&1&-{
\it y_c}\\ \noalign{\medskip}-{\it x_c}&-{\it y_c}&-{\delta}^{2}+{{\it x_c
}}^{2}+{{\it y_c}}^{2}\end {array} \right),
$$
$\delta>0$, their characteristic equation $f(\lambda)=\det(\lambda\,{\bf Nh}+{\bf M})$ is

$$
f(\lambda)=a_3\lambda^3+a_2\lambda^2+a_1\lambda+a_0,
$$
with
 \begin{align*}
a_3&=-\dfrac {1}{{a}^{2}{b}^{2}}, \;\; a_2= -\dfrac { \left( {a}^{2}-{b}^{
2}-{\delta}^{2}+{{\it x_c}}^{2}+{{\it y_c}}^{2} \right) }{{
a}^{2}{b}^{2}},  \\ \\
a_1&=\dfrac { {a}^{2}{b}^{2}+{a}^{2}{\delta}^{2}-{a}
^{2}{{\it y_c}}^{2}-{b}^{2}{\delta}^{2}+{b}^{2}{{\it x_c}}^{2} }{{a}^{2}{b}^{2}}\, \text{ and } \, a_0=-\delta^2.
\end{align*}
 %
Observe that
\begin{enumerate}
\item the coefficients $a_3$ and $a_0$ are strictly negative,
\item the characteristic equation $f(\lambda)$ has at least one negative real root.
\end{enumerate}

Following both \cite{Brozos:2018} and \cite{Liu-Chen:2004}, next proposition describes the position relationship between a circle and a hyperbola in terms of the roots of $f(\lambda)$, which will be denoted $\lambda_1, \lambda_2, \lambda_3$. As it happened in Proposition \ref{paracirc}, the last case was missing in our references and that is why it is proved.
\begin{proposition}\label{ch}
The positional relationships between a circle and a hyperbola is as follows:
\begin{enumerate}

\item $\mathcal{NH}$ and $\mathcal{M}$ are separated iff $f(\lambda) = 0$ has two distinct positive roots equal to or less than $b^2$; that is, $\lambda_1<0<\lambda_2<\lambda_3\leq b^2$.

\medskip

\item  $\mathcal{NH}$ and $\mathcal{M}$  have only two points of intersection iff $f(\lambda) = 0$ has two imaginary roots. 


\item \label{tres}

$\mathcal{NH}$ and $\mathcal{M}$  have only an inner tangent point iff $f(\lambda) = 0$ has a negative double root greater than $-a^2$ ($ \lambda_1<-a^2<\lambda_2=\lambda_3<0$), or the three roots are $-a^2$, $-b\delta$, $-b\delta$ with $a^2 \geq b\delta$.

%
%
%
%

%



\item  \label{cuatro}
$\mathcal{NH}$ and $\mathcal{M}$ have only two inner tangent points iff $a^2 < b\delta$ and the roots of $f(\lambda) $ are $-a^2$, $-a^2$, $-\dfrac{b^2\delta^2}{a^2}$.



\item \label{cinco}
%
$\mathcal{NH}$ and $\mathcal{M}$  have two points of intersection and an inner tangent point iff  $f(\lambda) = 0$ has a negative double root and all roots are less than or equal to $ -a^2$ with $a^2 < b\delta$. If a root is $-a^2$, then the three roots are $-a^2$,$-b\delta$, $-b\delta$.  

%
%
%

\item  $\mathcal{NH}$ and $\mathcal{M}$ have only one outer tangent point iff $f(\lambda) = 0$ has a positive double root which is less than $b^2$; that is $\lambda_1<0<\lambda_2=\lambda_3<b^2$. 


\item\label{siete} $\mathcal{NH}$ and $\mathcal{M}$  have two outer tangent points iff
the roots of $f(\lambda) = 0$ are $b^2,b^2$, $-\dfrac{a^2\delta^2}{b^2}$ with $b \leq \delta$. 



\item \label{ocho} 

$\mathcal{NH}$ and $\mathcal{M}$  have two points of intersection and an outer tangent point iff $f(\lambda) $ has a positive double root greater than $b^2$, $b^2< \lambda_1=\lambda_2 $, with $b < \delta$. 



\item \label{nueve}   $\mathcal{NH}$ and $\mathcal{M}$  have four points of intersection iff either $f(\lambda) = 0$ has three distinct negative roots which are not greater than $-a^2$, $ \lambda_1<\lambda_2<\lambda_3\leq-a^2$,  with $a^2<b\delta$ or $f(\lambda)$ has two distinct positive roots not less than $b^2$, $ \lambda_1<0<b^2\leq\lambda_2<\lambda_3$ with $b<\delta$.  

%
%
%
%
%
%

\item\label{diez} $\mathcal{M}$ is inside  $\mathcal{NH}$ iff $f(\lambda) = 0$ has three distinct negative roots, two of which are not less than $-a^2$,( $ \lambda_1<-a^2\leq\lambda_2<\lambda_3 < 0$ or  $ \lambda_1\leq-a^2<\lambda_2<\lambda_3 < 0$), or the three roots are $-a^2$, $-a^2$, $-\dfrac{b^2\delta^2}{a^2}$ with $a^2>b\delta$.

%
%
%
%
%

\item\label{once} $\mathcal{M}$ and $\mathcal{NH}$ have only an intersection point and an inner tangent point iff $f(\lambda) = 0$ has a triple negative root less than $-a^2$.
 
 \end{enumerate}
\end{proposition}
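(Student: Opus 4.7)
The plan is to mirror the proof of Point 9 of Proposition \ref{paracirc}, adapted to the hyperbolic setting. I will use a rational parametrization of the hyperbola $\mathcal{NH}$ by $t\in\R\setminus\{0\}$, for instance $(x,y)=\bigl(\tfrac{a(t^2-1)}{2t},\tfrac{b(t^2+1)}{2t}\bigr)$, which covers both branches ($t>0$ gives the branch $y\geq b$ and $t<0$ the branch $y\leq -b$). Substituting this parametrization into a general circle equation $(x-x_c)^2+(y-y_c)^2=\delta^2$ and clearing $t^2$ yields a quartic in $t$ whose four roots $t_1,t_2,t_3,t_4$ correspond to the four intersection points in $\C^2$. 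Because the constant and leading coefficients of this quartic both equal $a^2+b^2$, one obtains the clean constraint $t_1t_2t_3t_4=1$, which plays the role of the relation $x_1+x_2+x_3+x_4=0$ that was exploited in the parabola case.

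Next I would take three of the intersection points as symbolic parameters $p_i=\bigl(\tfrac{a(t_i^2-1)}{2t_i},\tfrac{b(t_i^2+1)}{2t_i}\bigr)$, $i=1,2,3$, and write down the unique circle $\mathcal{M}$ through $p_1,p_2,p_3$ (as was done for the parabola, the coefficients $d_1,d_2,d_3$ can be expressed as symmetric rational functions of $t_1,t_2,t_3$). The fourth intersection is forced to be $p_4$ with parameter $t_4=1/(t_1t_2t_3)$. With this setup one computes
\[
 f(\lambda)=\det(\lambda\,{\bf Nh}+{\bf M})=a_3(\lambda-r_{12})(\lambda-r_{13})(\lambda-r_{23}),
\]
where the three roots $r_{ij}$ are symmetric expressions in the corresponding pair $(t_i,t_j)$, analogous to the formulas $r_i=-a^2-(x_j+x_k)^2/(4a^2)$ of the parabola case. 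I expect each $r_{ij}$ to admit a closed form of the type $-a^2-\varphi(t_i,t_j)$ with $\varphi$ positive when the associated intersection is on the \emph{inner} (concave) side of the hyperbola, which is the geometric fingerprint distinguishing inner from outer tangencies.

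Given these explicit formulas, the proof of Point \ref{once} splits into two directions, precisely as in Point 9 of Proposition~\ref{paracirc}. For ($\Rightarrow$), suppose $\mathcal{M}$ and $\mathcal{NH}$ meet in one inner tangent point of multiplicity three and one transverse intersection; three of the $t_i$ coincide and the fourth is different, giving the four cases $\{t_1=t_2=t_3\}$, $\{t_1=t_2=t_4\}$, $\{t_1=t_3=t_4\}$, $\{t_2=t_3=t_4\}$. Using $t_1t_2t_3t_4=1$ to eliminate one parameter, I substitute into the root formulas and check that $r_{12}=r_{13}=r_{23}$, and that the common value is $<-a^2$ (the sign check realizes the inner-tangency condition; outer tangency would instead produce a triple root $>b^2$). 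For ($\Leftarrow$), starting from a triple root $\lambda_0<-a^2$ of $f$, I solve $r_{12}=r_{13}=r_{23}=\lambda_0$ in the $t_i$ and read off the four parametric families just described, each of which geometrically realises exactly one triple intersection point (the inner tangent point) together with a single transverse intersection.

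The main obstacle will be the bookkeeping involved in the hyperbolic case: unlike the parabola, whose polynomial parametrization gave very compact formulas and the additive relation $\sum x_i=0$, here the constraint among the $t_i$ is multiplicative and the root expressions $r_{ij}$ are more involved. I expect a convenient simplification to emerge by setting $s_i=(t_i^2+1)/t_i=\tfrac{2y_i}{b}$ and $u_i=(t_i^2-1)/t_i=\tfrac{2x_i}{a}$, rewriting everything in these variables to exploit the identity $s_i^2-u_i^2=4$. The other delicate point is verifying that the triple root is strictly less than $-a^2$ and not merely negative; this will follow from a careful sign analysis of $\varphi(t_i,t_j)$ on the specific families above, and will also rule out the possibility that the triple root configuration corresponds to an outer tangency or a degenerate case covered by earlier items of the proposition.
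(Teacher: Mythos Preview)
Your proposal is correct and follows essentially the same approach as the paper: the paper uses precisely the parametrization $(x,y)=\bigl(\tfrac{a(t^2-1)}{2t},\tfrac{b(t^2+1)}{2t}\bigr)$, writes the circle through $p_1,p_2,p_3$, obtains the fourth point with $t_4=1/(t_1t_2t_3)$, and factors $f(\lambda)=-\tfrac{1}{a^2b^2}\prod(\lambda-r_i)$ with the explicit roots $r_i=-\tfrac{(t_jt_k+1)^2a^2+(t_jt_k-1)^2b^2}{4t_jt_k}$, then handles the same four triple-coincidence cases in both directions. The only minor difference is that the paper writes the roots directly rather than in the form $-a^2-\varphi$, but the strict inequality $r_i<-a^2$ follows immediately from $(t_jt_k+1)^2/(4t_jt_k)>1$ when $t_jt_k>0$ and $t_jt_k\neq1$, so your anticipated sign analysis is straightforward and the auxiliary variables $s_i,u_i$ are not needed.
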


\begin{proof}
As we said previously, we must prove Point \ref{once} (see \cite{Liu-Chen:2004} for the rest).  

We are going to follow the same reasoning as in Proposition \ref{paracirc}. In general, the hyperbola $\mathcal{NH}$ and the circle $\mathcal{M}$ have four points of intersection in $\mathbb{C}^2$, maybe equal depending on the multiplicities. 
By considering the following rational parameterization of the hyperbola, 
$$
x=\frac{a(t^2 - 1)}{2t}, \quad y=\frac{b(t^2 + 1)}{2t},
$$
three of these intersection points are given symbolically as $ p_1=\left( \frac{a(t_1^2 - 1)}{2t_1}, \frac{b(t_1^2 + 1)}{2t_1} \right) $, $ p_2=\left( \frac{a(t_2^2 - 1)}{2t_2}, \frac{b(t_2^2 + 1)}{2t_2} \right)  $ and $ p_3=\left( \frac{a(t_3^2 - 1)}{2t_3}, \frac{b(t_3^2 + 1)}{2t_3} \right)  $. After some computations, another equation for the circle $\mathcal{M}$ is provided by the circle defined by these three points,
$$
\mathcal{M}: x^{2}+y^{2}+d_1\,x+d_2\,y+d_3=0,
$$
with

\begin{eqnarray*} 
d_1&=&\frac{(t_2t_3 - 1)(t_1t_3 - 1)(t_1t_2 - 1)(a^2 + b^2) }{4 t_1t_2at_3},\\ \\
d_2&=&-\frac{(t_2t_3 + 1)(t_1t_3 + 1)(t_1t_2 + 1)(a^2 + b^2)}{4bt_1t_2t_3},\\ \\
d_3&=&\frac{(a^2 + b^2)(t_1^2t_2^2t_3 + t_1^2t_2t_3^2 + t_1t_2^2t_3^2 + t_1 + t_2 + t_3)}{4t_1t_2t_3} + \frac{a^2 - b^2}{2}.
\end{eqnarray*}

Once we have this equation, the characteristic polynomial of $\mathcal{NH}$ and $\mathcal{C}$ can also be expressed  in these terms,
$$
f(\lambda)=\dfrac{-1}{a^2b^2}(\lambda-r_1 )(\lambda-r_2 )(\lambda-r_3 ),$$ 
with 
$$
r_1=  -\dfrac{(t_2t_3 + 1)^2a^2 + (t_2t_3 - 1)^2b^2}{4t_2t_3},\quad r_2= - \dfrac{(t_1t_3 + 1)^2a^2+ (t_1t_3 - 1)^2b^2}{4t_1t_3}, $$ $$ r_3=  - \dfrac{(t_1t_2 + 1)^2a^2 + (t_1t_2 - 1)^2b^2}{4t_1t_2};
$$
Besides, as we mentioned before, the intersection of the circle and the parabola does not consist only of three points but four. The one that remains to be given is equal to
$$p_4=\left(\dfrac{-a(t_1^2t_2^2t_3^2 - 1)}{2t_1t_2t_3},\dfrac{ b(t_1^2t_2^2t_3^2 + 1)}{2t_1t_2t_3}\right).$$

\bigskip
Now suppose that $\mathcal{M}$ and $\mathcal{NH}$ have only an intersection point and an inner tangent point. If we denote $P_1$ and $P_3$ these points, such that $\mathcal{M}$ meets $\mathcal{NH}$ with multiplicity 3 at $P_3$ and with multiplicity 1 at $P_1$, then, according to the above computations, the values of $P_3$ and $P_1$ can be:

$P_3=p_1=p_2=p_3$ and $P_1=p_4$ with $P_3\neq P_1$. In this case, $t_1=t_2=t_3\neq 0$, $t_1^2\neq \pm 1$, and so $  -\dfrac{(t_1^2 + 1)^2a^2 + (t_1^2 - 1)^2b^2}{4t_1^2}$ is triple root of $f$ less than $-a^2$ since $\dfrac{(t_1^2 + 1)^2}{4t_1^2}>1$.

$P_3=p_1=p_2=p_4$ and $P_1=p_3$ with $P_3\neq P_1$. In this case, $t_1 = t_2, t_3 = 1/t_2^3, t_2 \neq 0$,  $t_2^2\neq \pm 1$ and so, $  -\dfrac{(t_2^2 + 1)^2a^2 + (t_2^2 - 1)^2b^2}{4t_2^2}$ is triple root of $f$ less than $-a^2$ since $\dfrac{(t_2^2 + 1)^2}{4t_2^2}>1$.

$P_3=p_1=p_3=p_4$ and $P_1=p_2$ with $P_3\neq P_1$. In this case, $t_1 = t_3, t_2 = 1/t_3^3, t_3 \neq 0$,  $t_3^2\neq \pm 1$ and so, $  -\dfrac{(t_3^2 + 1)^2a^2 + (t_3^2 - 1)^2b^2}{4t_3^2}$ is triple root of $f$ less than $-a^2$ since $\dfrac{(t_3^2 + 1)^2}{4t_3^2}>1$.

$P_3=p_2=p_3=p_4$ and $P_1=p_1$ with $P_3\neq P_1$. In this case, $t_1 = 1/t_3^3, t_2 = t_3, t_3 \neq 0$,  $t_3^2\neq \pm 1$ and so, $  -\dfrac{(t_3^2 + 1)^2a^2 + (t_3^2 - 1)^2b^2}{4t_3^2}$ is triple root of $f$ less than $-a^2$ since $\dfrac{(t_3^2 + 1)^2}{4t_3^2}>1$.

Therefore, in all the cases we find a triple root of $f$ less than $-a^2$.

\bigskip
Going the other way, suppose that $f(\lambda) = 0$ has a triple root, less than $-a^2$. Obviously that implies that $r_1=r_2=r_3, r_1<-a^2$, such that each solution of this system gives rise to two different points of intersection, one of multiplicity three and the other of multiplicity one. 

Namely, the set of solutions of the system of equations $r_1=r_2=r_3$ with $r_1<-a^2$ is described  as follows, 
\begin{eqnarray*} 
 t_1 = t_3, & t_2 = t_3,& t_3  \neq 0,\quad t_3^2  \neq 1,\\  
 t_1 =t_2, &t_3 = \dfrac{1}{t_2^3},& t_2 \neq 0,\quad t_2^2  \neq 1,\\ 
 t_1 = \dfrac{1}{t_3^3}, &t_2 = t_3,& t_3 \neq 0,\quad t_3^2  \neq 1, \\ 
 t_1= t_3, &t_2 = \dfrac{1}{t_3^3},& t_3\neq 0,\quad t_3^2  \neq 1.
 \end{eqnarray*}
 Hence, 
 \begin{itemize}
 \item For the solution  $t_1 = t_2=t_3, t_3 \neq 0$, the intersection point of multiplicity three is 
$ \left( \frac{a(t_3^2 - 1)}{2t_3}, \frac{b(t_3^2 + 1)}{2t_3} \right) $ and the one of multiplicity one is $\left( \frac{-a(t_3^6 - 1)}{2t_3^3}, \frac{b(t_3^6 + 1)}{2t_3^3} \right) $.
 
  \item For the solution $t_1 =t_2, t_3 = \dfrac{1}{t_2^3}, t_2 \neq 0$, the intersection point of multiplicity three is 
   $ \left( \frac{a(t_2^2 - 1)}{2t_2}, \frac{b(t_2^2 + 1)}{2t_2} \right) $ and the one of multiplicity one is $\left( \frac{-a(t_2^6 - 1)}{2t_2^3}, \frac{b(t_2^6 + 1)}{2t_2^3} \right) $.

  \item For the solution $ t_1 = \dfrac{1}{t_3^3},t_2=t_3, t_3 \neq 0$, the intersection point of multiplicity three is 
   $ \left( \frac{a(t_3^2 - 1)}{2t_3}, \frac{b(t_3^2 + 1)}{2t_3} \right) $ and the one of multiplicity one is $\left( \frac{-a(t_3^6 - 1)}{2t_3^3}, \frac{b(t_3^6 + 1)}{2t_3^3} \right) $.
  
  \item For the solution $t_1 =t_3, t_2 = \dfrac{1}{t_3^3}, t_3 \neq 0$, the intersection point of multiplicity three is 
   $ \left( \frac{a(t_3^2 - 1)}{2t_3}, \frac{b(t_3^2 + 1)}{2t_3} \right) $ and the one of multiplicity one is $\left( \frac{-a(t_3^6 - 1)}{2t_3^3}, \frac{b(t_3^6 + 1)}{2t_3^3} \right) $.
 \end{itemize}
 
Then it is proved that $\mathcal{M}$ and $\mathcal{NH}$ have only an intersection point and an inner tangent point.

\end{proof}


Comparing this proposition with Proposition 2.3 in \cite{Liu-Chen:2004}, in addition to the fact that we consider a circle of any radius ($\delta$ may be different from 1), Points 3., 4., 5., 8. and 9. are slightly improved. In Case 3., the circle $\mathcal{M}$ is the osculating circle of the hyperbola at the vertex $(0,b)$ when $a^2=b\delta, x_c=0, y_c=b+\delta$; and at the vertex $(0,-b)$ when $a^2=b\delta, x_c=0, y_c=-b-\delta$. (See Exercise 3.2.6 in \cite{Univconic}). In Case 11, the circle $\mathcal{M}$ is the osculating circle of the hyperbola at the tangent point.


The result explained below aims to avoid calculation of roots, just as it happened with the ellipse and circle, characterizing the relative position between a circle and a hyperbola in terms of coefficients of different polynomials.

Let $g(\lambda)=f(\lambda-a^2)=a_3'\lambda^3+a_2'\lambda^2+a_1'\lambda+a_0'$, with
\begin{eqnarray*}
 a_3'&=&-\dfrac{1}{a^{2} b^{2}} \, ,\\
 a_2'&=& a_2-3a^2a_3 \\
 &=& \frac{2 a^{2}+b^{2}+\delta^{2}-\mathit{x_c}^{2}-\mathit{y_c}^{2}}{b^{2} a^{2}} \, ,\\ 	
 a_1'&=&3a^4a_3 - 2a^2a_2 + a_1 \\
 &=& -\dfrac{a^{4}+b^{2} a^{2}+\delta^{2} a^{2}+b^{2} \delta^{2}-2 a^{2} \mathit{x_c}^{2}-\mathit{y_c}^{2} a^{2}-b^{2} \mathit{x_c}^{2}}{b^{2} a^{2}}\\
 &=&\dfrac{2 a^{2} \mathit{x_c}^{2}+\mathit{y_c}^{2} a^{2}+b^{2} \mathit{x_c}^{2}-(a^2 + \delta^2)(a^2 + b^2) }{b^{2} a^{2}}  \, ,\\
 a_0'&=& -a^6a_3 + a^4a_2 - a^2a_1 + a_0\\
&=& -\dfrac{\mathit{x_c}^{2} \left(b^{2}+a^{2}\right)}{b^{2}} \, .
\end{eqnarray*}
\medskip
In a similar way,
let $q(\lambda)=f(\lambda+b^2)=a_3''\lambda^3+a_2''\lambda^2+a_1''\lambda+a_0''$, with
\begin{eqnarray*}
a_3''&=&-\frac{1}{a^{2} b^{2}}  \, ,\\
a_2''&=&a_2+3a_3b^2 \\
&=&-\frac{\mathit{x_c}^{2}+\mathit{y_c}^{2}+a^{2}+2 b^{2}-\delta^{2}}{a^{2} b^{2}}  \, ,\\
a_1''&= &3a_3b^4 + 2a_2b^2 + a_1\\
&=&-\frac{a^{2} b^{2}-a^{2} \delta^{2}+b^{4}-b^{2} \delta^{2}+b^{2} \mathit{x_c}^{2}+a^{2} \mathit{y_c}^{2}+2 b^{2} \mathit{y_c}^{2}}{a^{2} b^{2}}\\
&=&-\frac{(b^2 - \delta^2)(a^2 + b^2)+b^{2} \mathit{x_c}^{2}+a^{2} \mathit{y_c}^{2}+2 b^{2} \mathit{y_c}^{2}}{a^{2} b^{2}}  \, , \\
a_0''&=&-\frac{\mathit{y_c}^{2} \left(a^{2}+b^{2}\right)}{a^2} \, .
\end{eqnarray*}
From the coefficients of the polynomials $g(\lambda)$ and $q(\lambda)$, we can deduce the following observations.

\begin{enumerate}
\item  Since $b^2+a^2>0$,  $-a^2$ is a root of $f(\lambda)$ iff  $x_c= 0$; besides if $x_c\neq 0$, $g(\lambda)$ must have a negative root, that is, $f(\lambda)$ must have a negative root less than $-a^2$.

\item $f(\lambda)=-\dfrac{( \lambda+a^2 )( \lambda+b\delta )^2 }{a^2b^2}$ iff  $x_c= 0$ and $|y_c|=b+\delta$.

\item  Since $b^2+a^2>0$, $b^2$ is a root of $f(\lambda)$ iff  $y_c= 0$; besides if $y_c\neq 0$, $q(\lambda)$ must have a negative root, that is, $f(\lambda)$ must have a root less than $b^2$.
\end{enumerate}



Thus, the following proposition characterizes the relative position between a circle and a hyperbola in terms of the coefficients of $f(\lambda), g(\lambda)$ and $q(\lambda)$.

\begin{proposition}\label{ch2}
The positional relationships between a circle and a hyperbola are as follows:
\end{proposition}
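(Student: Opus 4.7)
The plan is to transcribe each of the eleven root configurations listed in Proposition \ref{ch} into polynomial sign conditions on the coefficients of $f(\lambda)$, $g(\lambda)=f(\lambda-a^{2})$ and $q(\lambda)=f(\lambda+b^{2})$, exactly in the spirit of Theorem \ref{formulaCanonical}. The three tools available are: (i) Lemma \ref{ellipsesdisc}, so that the sign of $\Delta$ tells three distinct real roots ($\Delta>0$) from two complex roots ($\Delta<0$) from the presence of multiple roots ($\Delta=0$); (ii) the discriminant $\Delta'=-12a_{3}a_{1}+4a_{2}^{2}$ of $f'(\lambda)$, which detects triple roots via $\Delta=\Delta'=0$, and separates double roots ($\Delta=0,\Delta'\neq 0$) from triple ones; and (iii) Descartes' law of signs in its tight form (Proposition \ref{realr}), which, under the hypothesis that all roots are real, gives the exact count of positive and negative roots from the number of sign changes in the coefficient sequence.

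The key observation is that the shifts $g(\lambda)=f(\lambda-a^{2})$ and $q(\lambda)=f(\lambda+b^{2})$ are designed so that their positive and negative roots localize the roots of $f$ with respect to the critical thresholds $-a^{2}$ and $b^{2}$, the squared semi-axes that distinguish the geometric regions. Concretely, a root of $f$ less than $-a^{2}$ becomes a negative root of $g$, a root of $f$ larger than $b^{2}$ becomes a positive root of $q$, and so on. Since $a_{3}<0$ and $a_{0}<0$ (the leading and trailing coefficients have fixed signs), the number of sign changes in $(a_{3},a_{2},a_{1},a_{0})$ and in the analogous sequences for $g$ and $q$ can be computed once and matched against each root pattern of Proposition \ref{ch}. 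The boundary cases in which a root equals exactly $-a^{2}$ or $b^{2}$ are detected by the vanishing of $a_{0}'$ or $a_{0}''$, i.e.\ by $x_{c}=0$ or $y_{c}=0$, which is precisely what the three coefficient observations immediately preceding the statement record; the further coincidences with $-b\delta$ or with $-b^{2}\delta^{2}/a^{2}$ are picked up by the simultaneous vanishing of $a_{0}'$ and $a_{1}'$ (resp.\ $a_{0}''$ and $a_{1}''$) via the factorization $f(\lambda)=-(\lambda+a^{2})(\lambda+b\delta)^{2}/(a^{2}b^{2})$ noted in observation 2.

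With this dictionary fixed, I would process the cases in three blocks according to the sign of $\Delta$. When $\Delta<0$, $f$ has exactly one real root (necessarily negative, since $a_{0},a_{3}<0$), so Case 2 is immediate. When $\Delta>0$, $f$ has three distinct real roots, and Descartes on $f$, $g$ and $q$ identifies Cases 1, 9 and 10 (separation, four intersections, $\mathcal{M}$ inside $\mathcal{NH}$) by how the three roots split among the intervals $(-\infty,-a^{2})$, $(-a^{2},0)$, $(0,b^{2})$, $(b^{2},\infty)$. When $\Delta=0$, I would split further according to $\Delta'$: if $\Delta'\neq 0$ we have a double root plus a simple root, matching Cases 3, 5, 6 and 8 (inner/outer tangent with or without two transverse intersections), and the location of the double root (positive vs.\ negative, above/below $-a^{2}$ or $b^{2}$) is again read off from sign changes of $g$ and $q$; if $\Delta'=0$ the triple-root cases 4, 7 and 11 are covered, distinguishing them by the sign of the root and by the vanishing of $a_{0}'$ or $a_{0}''$.

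The main obstacle I expect is not in any single case but in the bookkeeping of the degenerate configurations where a double or triple root lands exactly on $-a^{2}$ or $b^{2}$ (e.g.\ the two-inner-tangent-points Case 4 with roots $-a^{2},-a^{2},-b^{2}\delta^{2}/a^{2}$ and the parallel Case 7 with roots $b^{2},b^{2},-a^{2}\delta^{2}/b^{2}$). There the discriminant-only criterion is insufficient, and one must combine $\Delta=0$ with the vanishing of specific coefficients of $g$ or $q$ together with the inequality $a^{2}<b\delta$ (or $b<\delta$) that separates inner from outer configurations; verifying that the resulting sign conditions are mutually exclusive and jointly exhaustive across the eleven cases, and that they correctly identify the Case 3 sub-scenario in which the osculating circle at a hyperbola vertex occurs ($a^{2}=b\delta$, $x_{c}=0$, $y_{c}=\pm(b+\delta)$), is the delicate part that must be checked case by case against the explicit factored form of $f$.
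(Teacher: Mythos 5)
Your overall strategy is the paper's own: translate each root configuration of Proposition \ref{ch} into sign conditions via Descartes' rule applied to $f$, $g(\lambda)=f(\lambda-a^2)$ and $q(\lambda)=f(\lambda+b^2)$, with $\Delta$ separating the real-root counts and $\Delta'$ isolating triple roots. The localization dictionary (roots of $f$ below $-a^2$ become negative roots of $g$, roots above $b^2$ become positive roots of $q$) is exactly right and is the engine of the paper's argument. However, two concrete entries in your dictionary are wrong, and they hit precisely the degenerate cases you yourself flag as delicate.

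First, you place Cases 4 and 7 (two inner, resp.\ outer, tangent points) in the $\Delta=\Delta'=0$ branch as ``triple-root cases.'' They are not: the root patterns are $-a^2,-a^2,-b^2\delta^2/a^2$ with $a^2<b\delta$ and $b^2,b^2,-a^2\delta^2/b^2$, i.e.\ a double root together with a \emph{distinct} simple root, so $\Delta=0$ but $\Delta'\neq 0$ there. Only Case 11 is a genuine triple root. Executed as written, your case split would look for Cases 4 and 7 in the wrong branch; the paper instead characterizes them by $a_0'=a_1'=0$, $a_2'<0$ (resp.\ $a_0''=a_1''=0$) with no reference to $\Delta'$. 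Second, and more seriously, you claim that the coincidence of two roots at $-b\delta$ (the pattern $-a^2,-b\delta,-b\delta$ occurring inside Cases 3 and 5) is ``picked up by the simultaneous vanishing of $a_0'$ and $a_1'$.'' That vanishing detects a double root at $-a^2$, not at $-b\delta$; it is the signature of Cases 4 and 10's last subcase. The pattern $-a^2,-b\delta,-b\delta$ is \emph{not} expressible through the signs (or vanishing) of the coefficients of $g$ alone: the paper must adjoin the extra algebraic condition $|y_c|=b+\delta$ (obtained by dividing $f$ by $\lambda+a^2$ and computing the discriminant $(b-\delta+y_c)(b+\delta+y_c)(b-\delta-y_c)(b+\delta-y_c)$ of the quadratic factor; in the general setting this becomes the condition $J_1=0$). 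This missing ingredient is a genuine gap: without it, Cases 3 and 5 cannot be separated correctly in the $a_0'=0$ subcase. Finally, note that the inequalities $a^2<b\delta$ and $b<\delta$ that you propose to ``combine'' with the sign conditions do not appear in the statement being proved; the paper shows they are \emph{consequences} of the sign conditions (e.g.\ $a^6<|\lambda_1\lambda_2\lambda_3|=a^2b^2\delta^2$ when all roots lie below $-a^2$, and $a_0''=a_1''=0$ forcing $x_c^2=-(b^2-\delta^2)(a^2+b^2)/b^2\geq 0$), so your proof would additionally need these derivations to match the stated equivalences.
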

\begin{enumerate}

\item $\mathcal{NH}$ and $\mathcal{M}$ are separated iff $\Delta>0,{\bf Var}(f)=2,{\bf Var}(q)=0$.

\item  $\mathcal{NH}$ and $\mathcal{M}$  have only two points of intersection iff $\Delta<0$.

\item $\mathcal{NH}$ and $\mathcal{M}$  have only an inner tangent point iff either  $\Delta=0$, ${\bf Var}(f)=0$, ${\bf Var}(g)=2$ and $a_0'\neq  0$, 
or  $a_0'=0$, $a_2'\geq 0$ and  $| y_c |=b+\delta$.




\item  $\mathcal{NH}$ and $\mathcal{M}$ have only two inner tangent points iff $a_0'=0$, $a_1'=0$ and $a_2'<0$. 


\item  $\mathcal{NH}$ and $\mathcal{M}$ have two points of intersection and an inner tangent point iff \begin{itemize}\item either $\Delta=0$, $\Delta'\neq 0$, ${\bf Var}(g)=0$ and $a'_0\neq 0$, \item or $ a'_0=0$, $|y_c|=b+\delta$ and $a'_2<0$.\end{itemize}



\item  $\mathcal{NH}$ and $\mathcal{M}$ have only one outer tangent point iff $\Delta=0$, ${\bf Var}(f)=2$, ${\bf Var}(q)=0$, $y_c\neq 0$. 

\item $\mathcal{NH}$ and $\mathcal{M}$  have two outer tangent points iff $  a_0''=0  ,  a_1''=0 $.


\item   $\mathcal{NH}$ and $\mathcal{M}$  have two points of intersection and an outer tangent point if $\Delta=0$, $  {\bf Var}(q)=2, a''_0\neq 0$.


\item $\mathcal{NH}$ and $\mathcal{M}$  have four points of intersection iff $\Delta>0$ and either ${\bf Var}(g)= 0$; or ${\bf Var}(q)> 0$.
%
%

\item $\mathcal{M}$ is inside  $\mathcal{NH}$ iff ${\bf Var}(f)= 0$ and, either $\Delta>0$,  $a_0'\neq 0$, ${\bf Var}(g)=2$; or $\Delta>0$, $a_0'=0$, $a_1'\neq 0$ and ${\bf Var}(g)>0$; or $a_0'=0$, $a_1'=0$ and $a_2'>0$.


\item  
$\mathcal{M}$ and $\mathcal{NH}$ have only an intersection point and an inner tangent point iff 
$\Delta=0$, $\Delta'=0$,  $a_0'<0$ and $a_2'<0$. 

 \end{enumerate}

 \begin{proof}
 Points 1., 2., 6., 11. are quite obvious because of Lemma \ref{ellipsesdisc} and Proposition \ref{realr}.
 \begin{enumerate}
\item[3.] By \ref{tres} in Proposition  \ref{ch}, $\mathcal{NH}$ and $\mathcal{M}$  have only an inner tangent point iff $f(\lambda) = 0$ has a negative double root greater than $-a^2$ ($ \lambda_1<-a^2<\lambda_2=\lambda_3<0$) , or the three roots are $-a^2$, $-b\delta$, $-b\delta$ with $a^2 \geq b\delta$.

It is evident that $f(\lambda) = 0$ has a negative double root greater than $-a^2$ iff $\Delta=0$, $a'_0\neq 0$, ${\bf Var}(f)=0$ and ${\bf Var}(g)=2$.

On the other hand, we claim that the three roots are $-a^2$, $-b\delta$, $-b\delta$ with $a^2 \geq b\delta$ iff  $a'_0=0$, $|y_c|=b+\delta$ and $a'_2\geq 0$.

In fact,  $a'_0= 0$,  $a'_2\geq 0$ and $|y_c|=b + \delta$ iff
\begin{align}
f(\lambda)&=-{\dfrac { \left( {a}^{2}+\lambda \right)  \left( b\delta+\lambda \right) ^{2}}{{a}^{2}{b}^{2}}}\, ,
\\
g(\lambda)&=-{\dfrac {{\lambda}^{3}}{{a}^{2}{b}^{2}}}+2\,{\dfrac {
 \left( {a}^{2}-b\delta \right) \lambda^{2}}{{a}^{2}{b}^{2}}}-{\dfrac { \left( {a}
^{2}-b\delta \right) ^{2}\lambda}{{a}^{2}{b}^{2}}}\, ,
\end{align}
with $a^2 \geq b\delta$.


To better understand what happens, with $x_c=0$ $(a_0'=0)$, by dividing $f(\lambda)$ by $\lambda+a^2$, we obtain a factor of degree 2 whose discriminant is equal to $(b - \delta + y_c) (b + \delta + y_c)(b - \delta - y_c)(b + \delta - y_c)$, and $-b\delta$ is a double root if only if $|y_c|=b + \delta$.

 \item[4.] By \ref{cuatro} in Proposition  \ref{ch}, $\mathcal{NH}$ and $\mathcal{M}$ have only two inner tangent points iff  $a^2 < b\delta$ and the roots of $f(\lambda) $ are $-a^2$, $-a^2$, $-\dfrac{b^2\delta^2}{a^2}$. This is true when $g(\lambda)=-\dfrac{\lambda^{3}}{a^2 b^2}+\dfrac{\lambda^{2} \left(a^{4}-b^{2} \delta^{2}\right)}{b^{2} a^{4}}$, with $a_0'=a_1'=0$ and $a_2'<0$.

\item[5.] First suppose that $a_0'\neq 0$, so that $-a^2$ is not a root of $f(\lambda)$. Then, by \ref{cinco} in Proposition \ref{ch}, $\mathcal{NH}$ and $\mathcal{M}$ have two points of intersection and an inner tangent point iff $f(\lambda) = 0$ has a negative double root and all roots are less than $-a^2$ with $a^2<b\delta$, which is equivalent to $\Delta=0$, $\Delta'\neq 0$ and ${\bf Var}(g)=0$. Notice that $a^2 b^2 \delta^2=|\lambda_1\lambda_2\lambda_3| $, therefore $a^6 < a^2 b^2 \delta^2$, which implies that $a^2 < b\delta$.

Next, suppose $a_0'= 0$. Then, $\mathcal{NH}$ and $\mathcal{M}$ have two points of intersection and an inner tangent point iff the roots are $-a^2,-b\delta,-b\delta$ with $a^2<b\delta$. As we mention in the point 3., the roots $-a^2,-b\delta,-b\delta$ iff $a_0'= 0$ and $|y_c|=b+\delta$ with
$$g(\lambda)=-{\dfrac {{\lambda}^{3}}{{a}^{2}{b}^{2}}}+2\,{\dfrac {
 \left( {a}^{2}-b\delta \right) \lambda^{2}}{a^2b^2}}-{\dfrac { \left( {a}
^{2}-b\delta \right) ^{2}\lambda}{{a}^{2}{b}^{2}}},$$
hence the condition $a^2<b\delta$ is equivalent to $a_2'<0$.

\item[7.] By \ref{siete} in Proposition  \ref{ch}, $\mathcal{NH}$ and $\mathcal{M}$  have two outer tangent points iff
the roots of $f(\lambda) = 0$ are $b^2, b^2$, $-\dfrac{a^2\delta^2}{b^2}$ with $b \leq \delta$. This is equivalent to  $a_0''=a_1''=0$  with $b \leq \delta$.
However, as for the condition $b \leq \delta$, notice that when $a_0''=0$, we have
$$q(\lambda)=-\frac{\lambda^{3}}{a^{2} b^{2}}-\frac{\left(a^{2}+2 b^{2}-\delta^{2}+\mathit{x_c}^{2}\right) \lambda^{2}}{a^{2} b^{2}}-\frac{\left(a^{2} b^{2}-a^{2} \delta^{2}+b^{4}-b^{2} \delta^{2}+b^{2} \mathit{x_c}^{2}\right) \lambda}{a^{2} b^{2}},$$ and $a_1''=0$ leads to $x_c^2=-\dfrac{\left(b^2 -\delta^2  \right) \left(a^{2}+b^{2}\right)}{b^{2}}$, and hence $b\leq \delta$ such that the condition $a_0''=a_1''=0$ is sufficient.

\item[8.] By \ref{ocho} in Proposition  \ref{ch}, $\mathcal{NH}$ and $\mathcal{M}$ have two points of intersection and an outer tangent point iff $f(\lambda) $ has a positive double root greater than $b^2$ with $b < \delta$. Clearly, $\Delta=0$, $  {\bf Var}(q)=2$ and $a''_0\neq 0$ are necessary conditions. As for the condition $b < \delta$, since
 \begin{eqnarray*}
q(\lambda)&=&-\dfrac {\lambda^3}{a^2b^2}-{\dfrac {\left(a^2+2\,{b
}^{2}-{\delta}^2+{{\it x_c}}^2+{{\it y_c}}^2 \right) {\lambda}^2}{a^2b^2}}
\\
&&
-{\dfrac { \left(  \left( a^2+b^2 \right)  \left( b^2-\delta^2 \right) +
a^2{\it y_c}^2+b^2{\it x_c}^2+2\,{\it y_c}^2b^2
\right) \lambda}
{a^2b^2}}-{\dfrac {{{\it y_c}}^2 \left( a^2+b^2 \right)}{a^2}},
\end{eqnarray*}
the condition ${\bf Var}(q)=2$ is possible only if $b < \delta$. Thus, the conditions $\Delta=0$, ${\bf Var}(q)=2$ and $a''_0\neq 0$ are also sufficient.

 \item[9.]
By \ref{nueve} in Proposition  \ref{ch}, $\mathcal{NH}$ and $\mathcal{M}$  have four points of intersection iff either $f(\lambda)$ has three distinct negative roots which are not greater than $-a^2$, $ \lambda_1<\lambda_2<\lambda_3\leq-a^2$, with $a^2<b\delta$ or $f(\lambda)$ has two distinct positive roots not less than $b^2$, $ \lambda_1<0<b^2\leq\lambda_2<\lambda_3$ with $b<\delta$.

First, we prove that $f(\lambda) = 0$ has three distinct negative roots not greater than $-a^2$ with $a^2<b\delta$ iff $\Delta>0$ and ${\bf Var}(g)= 0$. It is clear that $f(\lambda) = 0$ has three distinct negative roots not greater than $-a^2$ iff $\Delta>0$ and ${\bf Var}(g)=0$. Next observe that this directly implies $a^2<b\delta$. If $a'_0\neq 0$, then all roots are less than $-a^2$ and hence $a^2<b\delta$ (see the point 5.). If $a'_0=0$, then a root is equal to $-a^2$ and the other two less than $-a^2$ and by solving the equation $g(\lambda)=0$, the fact that both should be less than $-a^2$ implies $b\delta>a^2$.

Next, let us see that $f(\lambda)$ has two distinct positive roots not less than $b^2$, $ \lambda_1<0<b^2\leq\lambda_2<\lambda_3$ with $b<\delta$ iff $\Delta>0$ and ${\bf Var}(q)> 0$.

Observe that if $a_0''\neq 0$, i.e. $b^2$ is not a root of $f(\lambda)$, then $f(\lambda)$ has two distinct positive roots greater than $b^2$ iff $\Delta>0$ and ${\bf Var}(q)= 2$. Since $a_0''<0$ and  $a_3''<0$, we have ${\bf Var}(q)\neq 1$ and so  the equality ${\bf Var}(q)= 2$ can be reduced to ${\bf Var}(q)>0$. If $a_0''= 0$, then $f(\lambda)$ has one positive root greater than $b^2$ iff $\Delta>0$ and ${\bf Var}(q)=1$. Since in this case ${\bf Var}(q)$ can not be equal to 2 because $q(\lambda)$ does always have a negative root, the equality ${\bf Var}(q)= 1$ can be also reduced to ${\bf Var}(q)>0$.

On the other hand, it is easy to see that if $b\geq \delta$, then $V(q)=0$.

 \item[10.]
Suppose first that $a_0'\neq 0$. In this case, by \ref{diez} in Proposition \ref{ch}, $\mathcal{M}$ is inside $\mathcal{NH}$ iff $f(\lambda) = 0$ has three distinct negative roots with $\lambda_1<-a^2<\lambda_2<\lambda_3<0$. This is equivalent to $\Delta>0$, ${\bf Var}(f)=0$ and ${\bf Var}(g)=2$.

Now let $a_0'= 0$, $a_1'\neq 0$ that is, $-a^2$ is a simple root of $f(\lambda)$. In this case, $\mathcal{M}$ is inside $\mathcal{NH}$ iff $f(\lambda) = 0$ has three distinct negative roots with $\lambda_1<-a^2=\lambda_2<\lambda_3<0$ or $\lambda_1=-a^2<\lambda_2<\lambda_3<0$. This is equivalent to $\Delta>0$, ${\bf Var}(f)=0$ and ${\bf Var}(g/\lambda)>0$.
Finally, let $a_0'= a_1'=0$. In this case, $-a^2$ is a double root of $f(\lambda)$ and
$$
g(\lambda)=-\dfrac{\lambda^3}{a^2b^2} +\dfrac{ (a^2 - b\delta)(a^2 + b\delta)\lambda^2}{b^2a^4}.
$$ Hence, in this last case, we can conclude that $\mathcal{M}$ is inside $\mathcal{NH}$ iff the three roots are $-a^2$, $-a^2$, $-\dfrac{b^2\delta^2}{a^2}$ with $a^2>b\delta$ iff $a_0'= a_1'=0$ and $a_2'>0$.

 \end{enumerate}
\end{proof}

This picture illustrates all relations described in Proposition \ref{ch} and \ref{ch2}.

 \begin{figure}[H]
\begin{center}
 \fbox{\includegraphics[scale=.13]{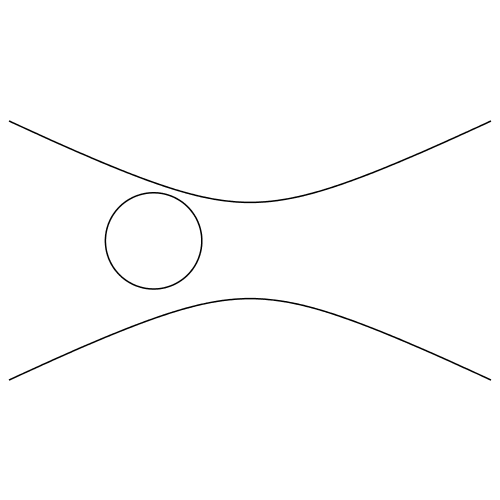} {\bf 1}} \hskip 3pt
  \fbox{\includegraphics[scale=.13]{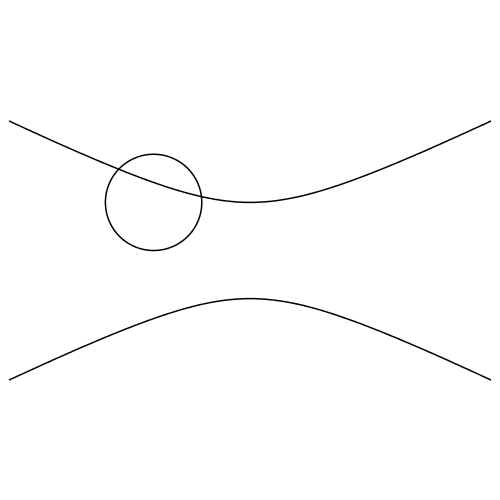} {\bf 2}} \hskip 3pt
 \fbox{\includegraphics[scale=.13]{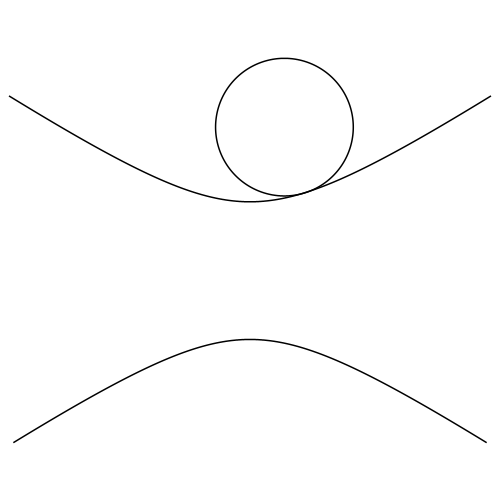} {\bf 3}} \hskip 3pt
  \fbox{\includegraphics[scale=.13]{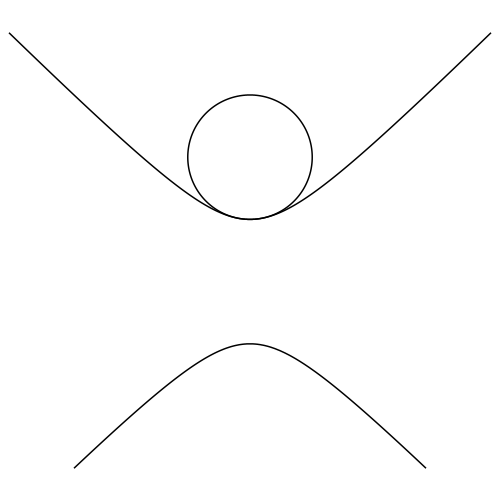} {\bf 3}} \hskip 3pt
 \fbox{\includegraphics[scale=.13]{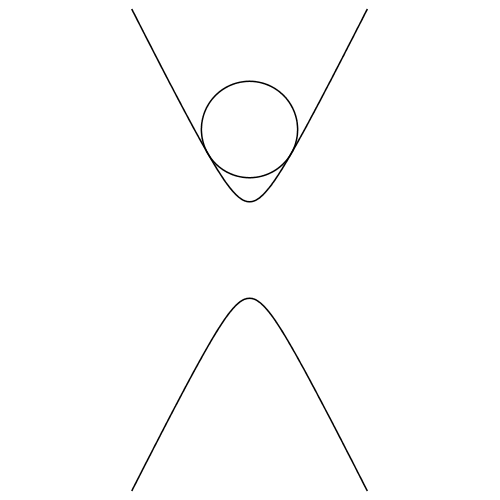} {\bf 4}} \\
 \vspace{2pt}
 \fbox{\includegraphics[scale=.13]{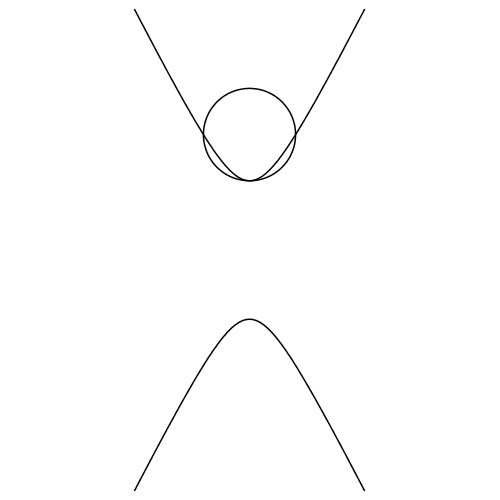} {\bf 5}} \hskip 3pt
 \fbox{\includegraphics[scale=.13]{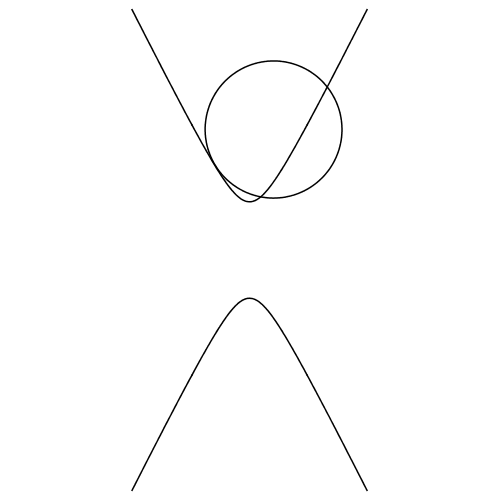} {\bf 5}} \hskip 3pt
 \fbox{\includegraphics[scale=.13]{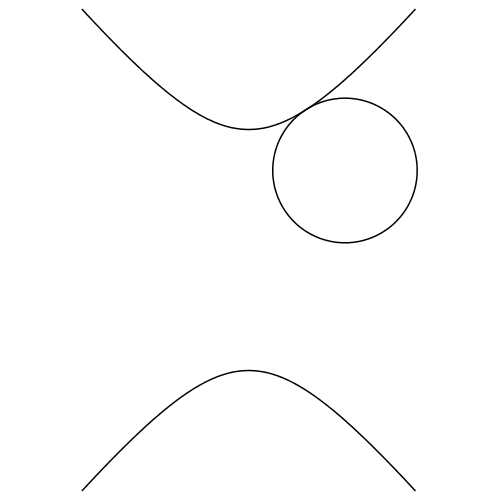} {\bf 6}} \hskip 3pt
 \fbox{\includegraphics[scale=.13]{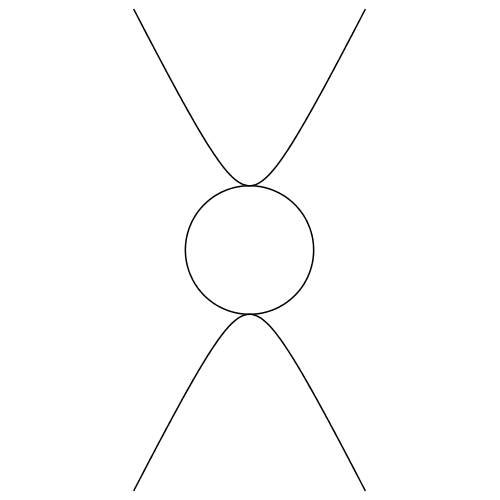} {\bf 7}} \\
 \vspace{2pt}
 \fbox{\includegraphics[scale=.13]{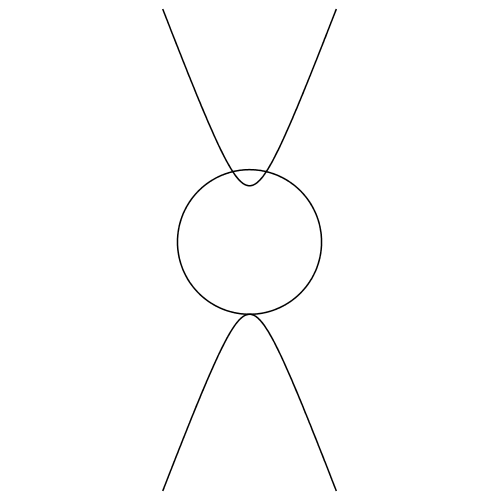} {\bf 8}} \hskip 2pt
 \fbox{\includegraphics[scale=.13]{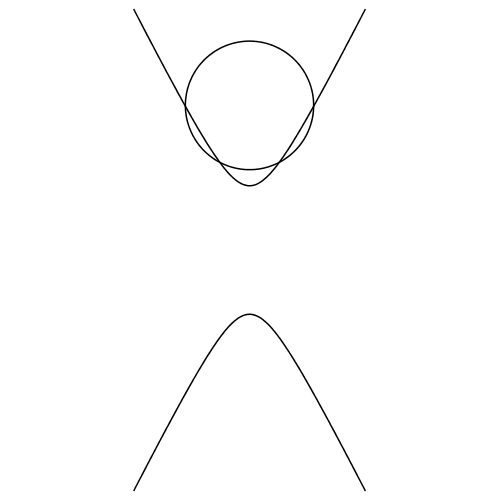} {\bf 9}} \hskip 2pt
 \fbox{\includegraphics[scale=.13]{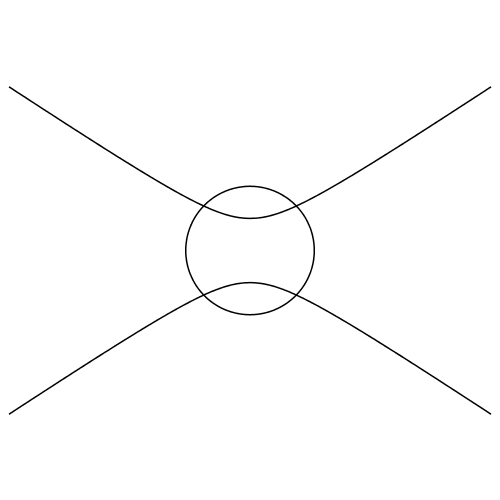} {\bf 9}} \hskip 2pt
 \fbox{\includegraphics[scale=.13]{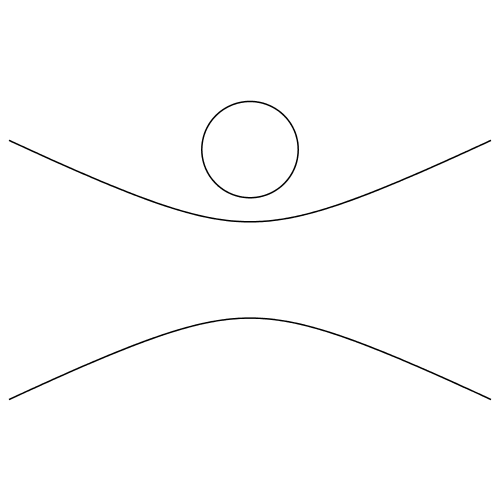} {\bf 10}} \hskip 2pt
  \fbox{\includegraphics[scale=.13]{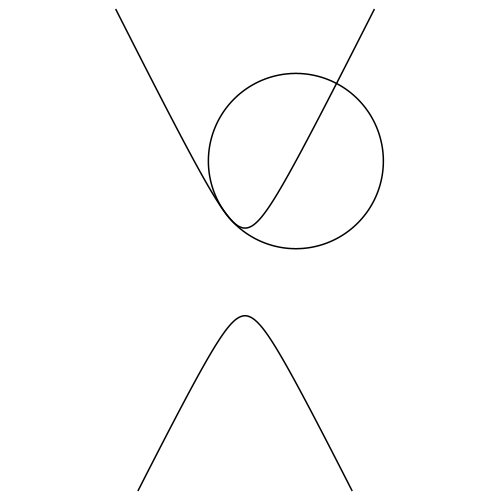} {\bf 11}}
\end{center}
\caption{ Circle and Hyperbola. Proposition \ref{ch}}
\end{figure}


\subsection{Relative position}\label{sec:nocanonicalEH}

Assume now that the ellipse $\mathcal{M}$ and the hyperbola $\mathcal{NH}$ are defined, respectively, by two symmetric matrices (not in canonical form),
$$
{\bf M}=\begin{pmatrix}
   a_{11} & a_{12} & a_{13} \\
   a_{12} & a_{22} & a_{23} \\
   a_{13} & a_{23} & a_{33}
\end{pmatrix},\quad
{\bf Nh}=\begin{pmatrix}
   b_{11} & b_{12} & b_{13} \\
   b_{12} & b_{22} & b_{23} \\
   b_{13} & b_{23} & b_{33}
\end{pmatrix}.
$$
We aim to characterize their relative position directly from their characteristic equation.
%
%
Following the definitions of Section \ref{cheq},
since $\mathcal{M}$ is an ellipse, we have that $T_1>0$; moreover, since it is real, we have $(a_{11}+a_{22})\,L_3<0.$
Assume that $a_{11}>0$, $a_{22} >0$, $L_3<0$ and the interior of the ellipse $\mathcal{M}$ is defined by $X^{T}MX<0$.
%
%
As for  $\mathcal{NH}$, we should have $T_2 <0$. Assume that $L_0<0$.

 By the same reasoning as in Section \ref{conics}, the given ellipse and hyperbola are transformed into the following circle and hyperbola,
$$
\mathcal{M}_2  : (x'-x'_c)+(y'-y'_c)^2 = \frac{-L_3}{T_1}\, ,
$$
$$
\mathcal{NH}_2  :H_0 x'^2 + H_2 y'^2  + H_5 =0 ,
$$
with $$
H_0=\dfrac{T+\sqrt{T^2-4T_1T_2}}{2T_1}\, ,\; H_2=\dfrac{T-\sqrt{T^2-4T_1T_2}}{2 T_1}\, ,\; H_5=\dfrac{L_0}{T_2}\, .
$$
Observe that, according to the properties of the hyperbola, we should have $H_0>0,H_2<0$ and $H_5>0.$

Then, since our goal is to apply the formulae from the previous section, dividing $\mathcal{NH}_2$ by $H_5$, we obtain the equation
$$
\dfrac{H_0}{H_5} x'^2 + \dfrac{H_2} {H_5}y'^2  + 1 =0,
$$
and following the notation of this section,
$$
\frac{1}{a^2}=\frac{H_0}{H_5},\quad -\frac{1}{b^2}=\frac{H_2}{H_5}.
$$
Hence, on the one hand, we have
$$
 (\det A)^{-2} \det(\lambda {\bf Nh}+ {\bf M})= (\det A)^{-2}(L_0 \lambda^3+ L_1 \lambda^2+  L_2 \lambda +L_3)= \det (\lambda {\bf Nh_1}+ {\bf M_1}),
$$
$$
\det (\lambda {\bf Nh_1}+ {\bf M_1})=\det (\lambda {\bf Nh_2}+ {\bf \hat{M}_2}),
$$
and on the other hand, since $\dfrac{\bf Nh_2}{H_5}$ and ${\bf \hat{M}_2}$ define a hyperbola and an ellipse like the ones introduced in this section,
$$
\det \Big(\lambda \frac{\bf Nh_2}{H_5}+ {\bf \hat{M}_2}\Big)=a_3 \lambda^3+ a_2 \lambda^2+  a_1 \lambda +a_0=(\det {\bf A})^{-2}\Big(\frac{L_0}{H_5^3} \lambda^3+ \frac{L_1}{H_5^2} \lambda^2+  \frac{L_2}{H_5} \lambda +L_3\Big).$$
Thus,
$$\Delta_f=(\det {\bf A})^{-8}\frac{\Delta_F}{H_5^6},$$
$$ a_3=(\det {\bf A})^{-2}\frac{L_0}{H_5^3},\; a_2=(\det {\bf A})^{-2}\frac{L_1}{H_5^2}, \; a_1=(\det {\bf A})^{-2}\frac{L_2}{H_5}, \; a_0=(\det {\bf A})^{-2} L_3.
$$
As a consequence, the coefficients of $g(\lambda)=f(\lambda-a^2)=a_3'\lambda^3+a_2'\lambda^2+a_1'\lambda+a_0'$ will be
\begin{align*}
a_2'&=(\det {\bf A})^{-2}\left(-3 L_0 \dfrac{a^2}{H_5^3 }+ \dfrac{L_1 }{H_5^2}\right),	\\
a_1'&=(\det {\bf A})^{-2}\left(3 L_0  \dfrac{a^4}{H_5^3 }- 2 L_1  \dfrac{a^2}{H_5^2} + \dfrac{L_2 }{H_5}\right), \\
a_0'&=(\det {\bf A})^{-2}\left(-L_0  \dfrac{a^6}{H_5^3} + L_1  \dfrac{a^4}{H_5^2} - L_2  \dfrac{a^2}{H_5} +  L_3\right) .
\end{align*}
Now, since $\dfrac{1}{a^2}=\dfrac{H_0}{H_5}$, we claim that

\medskip

$a_2'$ will have the same sign as $ -3L_0  + L_1 H_0=:J_5$,

\medskip
$a_1'$ will have the same sign as $3L_0 - 2L_1 H_0 + L_2H_0^2=:J_4$,

\medskip
$a_0'$  will have the same sign as $-L_0  + L_1 H_0 - L_2 H_0^2 + L_3 H_0^3=:J_3$.

\medskip
\noindent In a similar way, the coefficients of $q(\lambda)=f(\lambda+b^2)=a_3''\lambda^3+a_2''\lambda^2+a_1''\lambda+a_0''$ will be
\begin{align*}
a_2''&=(\det {\bf A})^{-2}\left(3 L_0 \dfrac{ b^2}{H_5^3} +  \dfrac{L_1 }{H_5^2}\right) , \\
a_1''&=(\det {\bf A})^{-2}\left(3 L_0  \dfrac{b^4}{H_5^3} + 2 L_1   \dfrac{b^2}{H_5^2} +  \dfrac{L_2 }{H_5}\right),\\
a_0'&'=(\det {\bf A})^{-2}\left(L_0  \dfrac{b^6}{H_5^3}+ L_1   \dfrac{b^4}{H_5^2} + L_2  \dfrac{ b^2}{H_5} +  L_3\right).
\end{align*}
Since $\dfrac{-1}{b^2}=\dfrac{H_2}{H_5}$, we claim that

\medskip

$a_2''$ will have the same sign as $ -3L_0/H_2  + L_1 =:K_5$,

\medskip
$a_1''$ will have the same sign as $ 3L_0-2H_2L_1+H_2^2L_2 =:K_4$,

\medskip
$a_0''$  will have the same sign as $L_0  - L_1 H_2 + L_2 H_2^2 - L_3 H_2^3=:K_3$.

\medskip
\noindent This therefore means that
$$
{\bf Var}(f)={\bf Var}(F), {\bf Var}(g)={\bf Var}(L_0,J_5,J_4,J_3),{\bf Var}(q)={\bf Var}(L_0,K_5,K_4,K_3) .
$$
On the other hand,
\begin{enumerate}
\item[i)] $-a^2$ is root of $f$ iff $0$ is a root of $g$ iff $J_3=0$.

\item[ii)] Since $a$, $b$ and $\delta$ are all strictly positive, $-a^2$ is triple root of $f$ iff $0$ is a triple root of $g$ iff $J_5=J_4=J_3=0$ iff $a^2=b\delta$ and $|y_c|=b+\delta$.

\item[iii)] With $a^2\neq b\delta$, $-a^2$ is root of $f$ and $-b\delta$ is a double root of $f$ iff $0$ is a root of $g$ and $a^2-b\delta$ is a double root of $g$ iff $J_3=0$, $a^2-b\delta=\frac{a_2'}{2\,a_3'} $ 
iff $J_3=0$,
$\sqrt{\dfrac{L_3 H_5}{H_2 T_1}} = H_5\dfrac{H_0L_1 - L_0}{2H_0L_0}$. This is equivalent to saying that $|y_c|=b+\delta$.
\end{enumerate}
Let $$J_1=\sqrt{\dfrac{L_3 H_5}{H_2 T_1}}  - H_5\dfrac{H_0L_1 - L_0}{2H_0L_0}.$$ Combining these facts, we conclude the following result.


\begin{theorem}\label{he}
The relationship between the ellipse $\mathcal{M}$ and the hyperbola $\mathcal{NH}$ are as follows:
\begin{enumerate}

\item $\mathcal{NH}$ and $\mathcal{M}$ are separated iff $\Delta>0,{\bf Var}(F)=2,{\bf Var}(L_0,K_5,K_4,K_3)=0$.

\item  $\mathcal{NH}$ and $\mathcal{M}$  have only two points of intersection iff $\Delta<0$.

\item  $\mathcal{NH}$ and $\mathcal{M}$  have only an inner tangent point iff 

\medskip

either   $\Delta=0$; ${\bf Var}(F)=0$, $J_3\neq  0 $, ${\bf Var}(L_0,J_5,J_4,J_3)=2$, 

or  $J_3=0$, $J_4= 0$, $J_5=0$,

or  $J_1=0$,  $J_3=0$, $J_4\neq 0$  and $J_5> 0$.

\item   $\mathcal{NH}$ and $\mathcal{M}$ have only two inner tangent points iff 
 $J_3=0 $, $J_4=0$ and $J_5<0$.

\item   $\mathcal{NH}$ and $\mathcal{M}$  have two points of intersection and an inner tangent point iff   \medskip

either $\Delta=0$, $\Delta'\neq 0$, ${\bf Var}(L_0,J_5,J_4,J_3)=0, J_3\neq 0$; 

or $\Delta=0$, ${\bf Var}(L_0,J_5,J_4)=0, J_3=0$, $J_5<0$ and $J_1=0$.

\item  $\mathcal{NH}$ and $\mathcal{M}$ have only one outer tangent point iff
$\Delta=0$, ${\bf Var}(F)=2$, ${\bf Var}(L_0,K_5,K_4,K_3)=0$ and $K_3\neq 0$.

\item $\mathcal{NH}$ and $\mathcal{M}$  have two outer tangent points iff $ K_3=0  ,  K_4=0 $.

\item   $\mathcal{NH}$ and $\mathcal{M}$ have two points of intersection and an outer tangent point
if $\Delta=0$, $  {\bf Var}(L_0,K_5,K_4,K_3)=2, K_3\neq 0$.

\item  $\mathcal{NH}$ and $\mathcal{M}$  have four points of intersection iff either $\Delta>0$ and ${\bf Var}(L_0,J_5,J_4,J_3)=0$ \medskip or $\Delta>0$ and ${\bf Var}(L_0,K_5,K_4,K_3)>0$.

\item  $\mathcal{M}$ is inside  $\mathcal{NH}$ iff \medskip either  $\Delta>0$, ${\bf Var}(F)= 0$, $J_3\neq 0$,  ${\bf Var}(L_0,J_5,J_4,J_3)=2$ or
 $\Delta>0$, ${\bf Var}(F)= 0$, $J_3= 0$, $ J_4\neq 0$, ${\bf Var}(L_0,J_5,J_4)>0$;
or $J_3=0 $, $J_4=0$, $J_5>0$.


\item  
$\mathcal{M}$ and $\mathcal{NH}$ have only an intersection point and an inner tangent point iff 
$\Delta=0$, $\Delta'=0$,  $J_3<0$ and $J_5<0$.
 
 \end{enumerate}

 \end{theorem}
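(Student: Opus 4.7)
The plan is to reduce Theorem \ref{he} to Proposition \ref{ch2} by invoking the sequence of affine transformations $\mathcal{S}_1, \mathcal{S}_2$ constructed in Section \ref{sec:nocanonicalEH}, which carry the general pair $(\mathcal{M}, \mathcal{NH})$ to a canonical pair $(\mathcal{M}_2, \mathcal{NH}_2)$ in which $\mathcal{M}_2$ is a circle and $\mathcal{NH}_2$ is a hyperbola in canonical form. These transformations are affine (an orthogonal change of coordinates composed with a positive diagonal scaling and a rigid motion), so they preserve incidence, tangency and inclusion; consequently the eleven geometric configurations enumerated in Proposition \ref{ch2} can be matched case by case with those of Theorem \ref{he}.

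Next, I would exploit the sign dictionary developed in Section \ref{sec:nocanonicalEH}. Because $(\det \mathbf{A})^{-2} > 0$ and $H_5 > 0$, the coefficients of $f(\lambda)$ satisfy $\operatorname{sign}(a_i) = \operatorname{sign}(L_{3-i})$, whence ${\bf Var}(f) = {\bf Var}(F)$; similarly the computations preceding the theorem give ${\bf Var}(g) = {\bf Var}(L_0, J_5, J_4, J_3)$ and ${\bf Var}(q) = {\bf Var}(L_0, K_5, K_4, K_3)$. The relation $\Delta_f = (\det \mathbf{A})^{-8}\Delta_F / H_5^6$ preserves the sign of the discriminant, and the analogous identity for $\Delta'$ preserves the sign of the derivative discriminant. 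With these dictionaries in place, Cases 1, 2, 4, 6, 7, 8, 9 and 11 follow by a direct substitution into the corresponding cases of Proposition \ref{ch2}.

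The genuinely delicate cases are 3, 5 and 10, which in the canonical setting involve the degenerate root pattern $-a^2, -b\delta, -b\delta$ and therefore carry the side condition $|y_c| = b + \delta$. Here I would invoke observations (i)--(iii) stated just before the theorem: $J_3 = 0$ is the general counterpart of $a'_0 = 0$ (that is, $-a^2$ being a root of $f$), $K_3 = 0$ is the counterpart of $a''_0 = 0$ (that is, $b^2$ being a root), and $J_1 = 0$ precisely encodes $|y_c| = b + \delta$ under the hypothesis $J_3 = 0$. The sign of $J_5$ then discriminates between the subcases $a^2 > b\delta$ (Case 10) and $a^2 < b\delta$ (Cases 4 and 5), matching the sign conditions stated in the theorem; a symmetric discussion using $K_3, K_4$ handles the outer-tangency side conditions in Cases 7 and 8.

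The main obstacle will be verifying the equivalence $J_1 = 0 \Leftrightarrow |y_c| = b + \delta$ (under $J_3 = 0$) at the level of the general coefficients. This amounts to tracing the transformation $\mathcal{S}_1 \circ \mathcal{S}_2$ explicitly enough to express $y_c, b, \delta$ in terms of $L_i, T_1, T_2, H_0, H_2, H_5$ and checking the resulting algebraic identity, as sketched in observation (iii). Modulo this verification, each case of Theorem \ref{he} is a mechanical translation of the corresponding case of Proposition \ref{ch2} via the sign dictionary, and the proof reduces to assembling these case-by-case translations.
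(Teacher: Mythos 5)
Your proposal is correct and follows essentially the same route as the paper: the authors prove Theorem \ref{he} precisely by reducing to Proposition \ref{ch2} through the affine transformations of Section \ref{sec:nocanonicalEH}, the sign dictionary ${\bf Var}(f)={\bf Var}(F)$, ${\bf Var}(g)={\bf Var}(L_0,J_5,J_4,J_3)$, ${\bf Var}(q)={\bf Var}(L_0,K_5,K_4,K_3)$, the sign-preservation of $\Delta$ under $\Delta_f=(\det{\bf A})^{-8}\Delta_F/H_5^6$, and observations (i)--(iii) identifying $J_3=0$, $K_3=0$ and $J_1=0$ with the canonical side conditions. The one step you flag as the main obstacle, namely $J_1=0\Leftrightarrow|y_c|=b+\delta$ under $J_3=0$, is likewise only asserted (in observation (iii)) rather than derived in detail in the paper.
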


Figure \ref{13} illustrates all relations described in Theorem \ref{he}.
  \begin{figure}[H]
\begin{center}
 \fbox{\includegraphics[scale=.16]{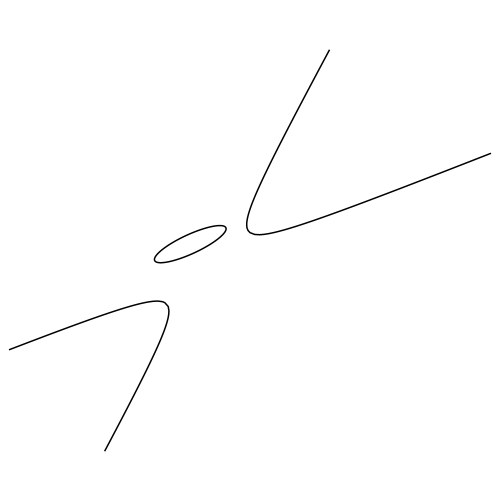} {\bf 1}} \hskip 3pt
  \fbox{\includegraphics[scale=.16]{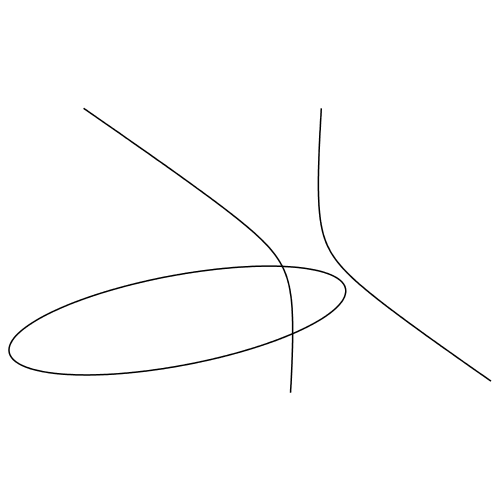} {\bf 2}} \hskip 3pt
 \fbox{\includegraphics[scale=.16]{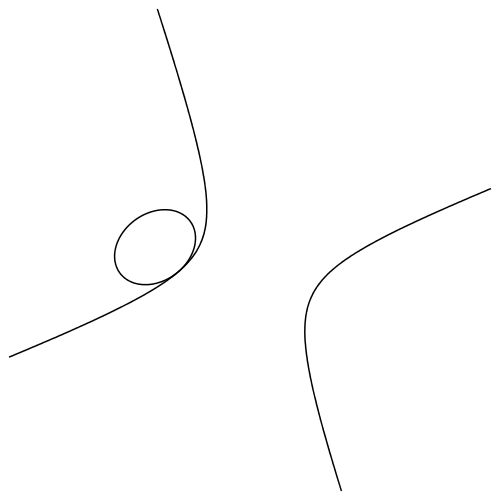} {\bf 3}} \hskip 3pt
  \fbox{\includegraphics[scale=.16]{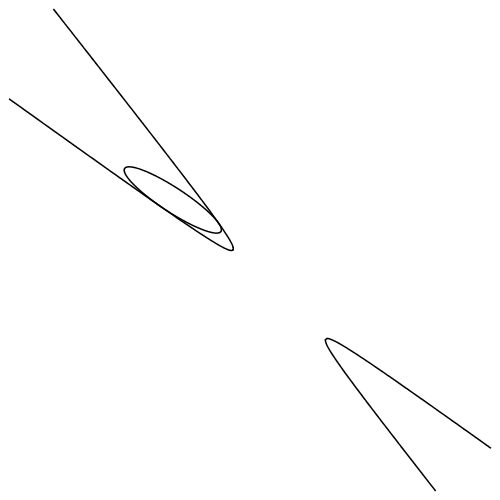} {\bf 4}} \\
 \vspace{2pt}
 \fbox{\includegraphics[scale=.16]{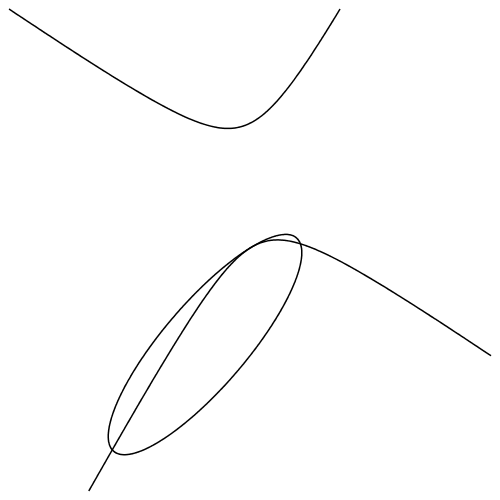} {\bf 5}} \hskip 3pt
 \fbox{\includegraphics[scale=.16]{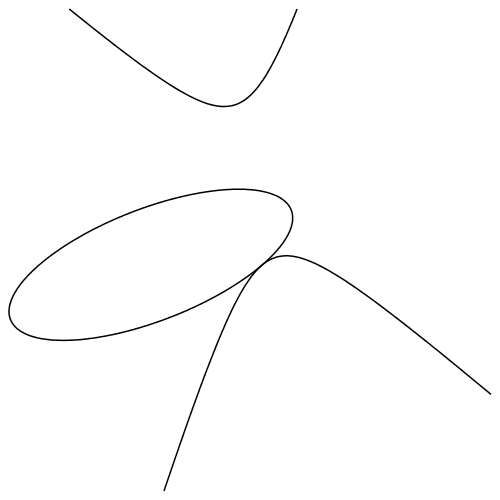} {\bf 6}} \hskip 3pt
  \fbox{\includegraphics[scale=.16]{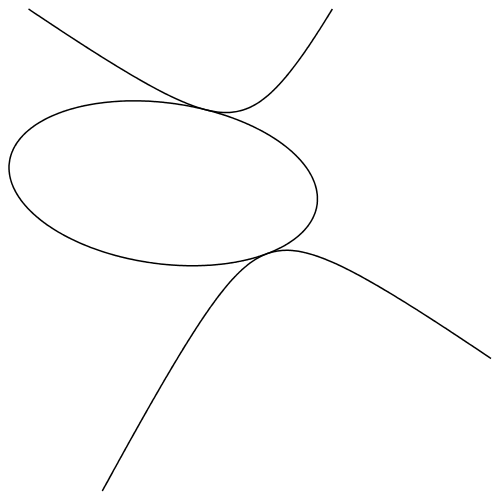} {\bf 7}} \\
 \vspace{2pt}
 \fbox{\includegraphics[scale=.16]{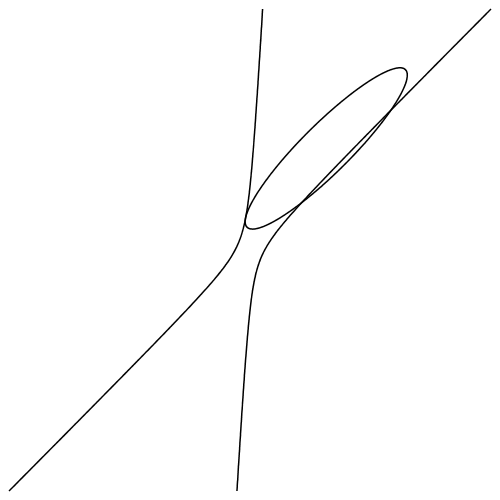} {\bf 8}}\hskip 3pt 
 \fbox{\includegraphics[scale=.16]{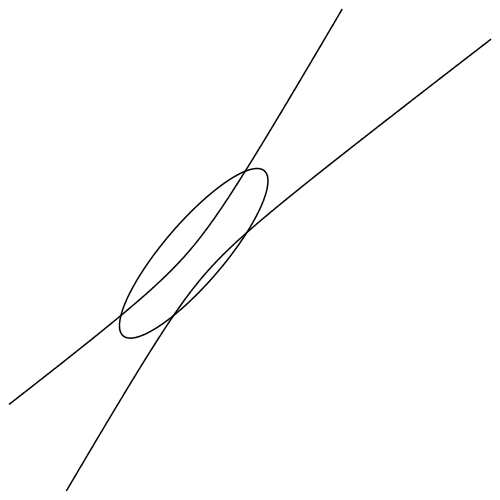} {\bf 9}} \hskip 3pt
 \fbox{\includegraphics[scale=.16]{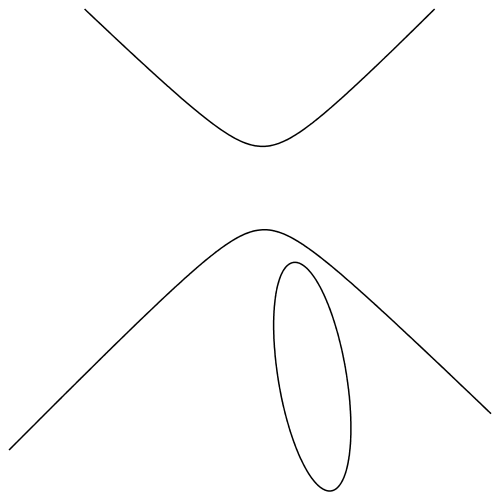} {\bf 10}}\hskip 3pt
 \fbox{\includegraphics[scale=.16]{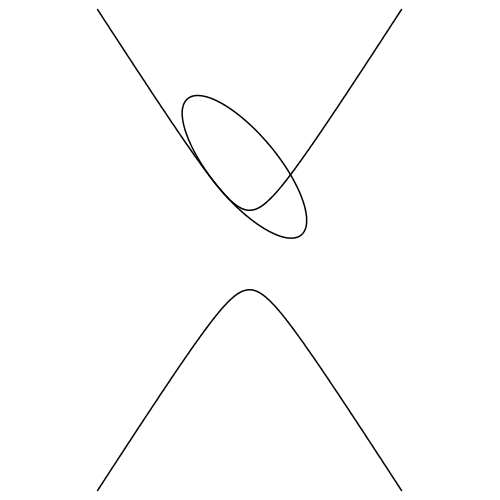} {\bf 11}}\hbox{}
\end{center}
\caption{Ellipse and Hyperbola - Theorem \ref{he}}
\label{13}
\end{figure}

\section{Conclusion}

We presented a method to determine the relative position of a pair of conics, for the two following cases:
 a parabola and an ellipse, and a hyperbola and an ellipse. The method requires the computation of certain coefficients which can be calculated directly from the coefficients of the implicit equations of both conics. Among those coefficients is, for example, the discriminant of the characteristic equation of the pair of conics. The sequence of signs of those coefficients characterizes the relative position of the conics. There are 9 relative positions for the pair parabola-ellipse, and 11 relative positions for the pair hyperbola-ellipse.
 
The method avoids the computation of the intersection points of the conics, and the computation of the distance between the conics. This approach allows working with conics whose implicit equation coefficients depend on a parameter. This may be very useful for application to situations with moving conics.




\end{document}